\pgfplotsset{
    compat=1.3,
    legend image code/.code={
        \draw [#1] (0cm,-0.1cm) rectangle (0.6cm,0.1cm);
    },
}
\crefname{theorem}{Theorem}{Theorems}
\Crefname{lemma}{Lemma}{Lemmas}
\Crefname{claim}{Claim}{Claims}
\Crefname{observation}{Observation}{Observations}
\Crefname{algorithm}{Algorithm}{Algorithms}
\Crefname{myalgctr}{Algorithm}{Algorithms}
\Crefname{challenge}{Challenge}{Challenges}
\algrenewcommand\algorithmicindent{1em}%
\DeclarePairedDelimiter\floor{\lfloor}{\rfloor}
\definecolor{mygreen}{RGB}{20,140,80}
\definecolor{linkcolor}{RGB}{0,0,230}
\definecolor{mylightgray}{RGB}{230,230,230}
\definecolor{verylightgray}{RGB}{245,245,245}
\newcommand{\opt}{\textsc{OPT}\xspace}
\newcommand{\eps}{\varepsilon}
\newcommand{\arccoloring}{\textsc{Arc Coloring}\xspace}
\newcommand{\threepartition}{\textsc{3-partition}\xspace}
\newcommand{\CAUTCSW}{\textsc{CAUTC-SW}\xspace}
\newcommand{\CAUTCDECISION}{\textsc{CAUTC-Decision}\xspace}
\newcommand{\geom}{\mathsf{Geom}}
\newcommand{\fact}{\log^2 n}
\newcommand{\mbmwc}{\textsc{MbMwC}\xspace}
\newcommand{\mis}{\textsc{MIS}\xspace}
\newcommand{\draw}{\geom(\exp(\eps/\fact))}
\newcommand{\starttime}{start}
\newcommand{\etime}{end}
\newcommand{\valset}{U}
\newcommand{\utility}[2]{\ensuremath{u_{#1}(#2)}}
\newcommand{\scalgo}{\cref{alg:quanquan}\xspace}
\newcommand{\efonealgo}{\cref{alg:augmented-ef1}\xspace}
\newcommand{\coursejcapacity}[1]{\ensuremath{D_{#1}}}
\newcommand{\studenticreditcap}[1]{\ensuremath{c_{#1}}}
\newcommand{\defn}[1]{\textbf{\emph{#1}}}
\title{An Algorithmic Approach to Address Course Enrollment Challenges} 
\author{Arpita Biswas}{Harvard University, USA \and \url{https://sites.google.com/view/arpitabiswas} }{arpitabiswas@g.harvard.edu}{https://orcid.org/0000-0002-5720-013X}{}
\author{Yiduo Ke}{Northwestern University, USA \and \url{https://sites.northwestern.edu/yiduoke/}}{yiduoke2026@u.northwestern.edu}{https://orcid.org/0009-0000-8118-948X}{}
\author{Samir Khuller}{Northwestern University, USA\and \url{https://www.mccormick.northwestern.edu/research-faculty/directory/profiles/khuller-samir.html}}{samir.khuller@northwestern.edu}{https://orcid.org/0000-0002-5408-8023}{}
\author{Quanquan C. Liu}{Northwestern University, USA\and \url{https://quanquancliu.com/}}{quanquan@northwestern.edu}{https://orcid.org/0000-0003-1230-2754}{}
\authorrunning{A. Biswas, Y. Ke, S. Khuller, Q. C. Liu} 
\keywords{fairness, allocation, matching, algorithms} 
\begin{document}

\maketitle

\begin{abstract}
Massive surges of enrollments in courses have led to a crisis in several computer science departments - not only is the demand for certain courses extremely high from majors, but the demand from non-majors is also very high. Much of the time, this leads to significant frustration on the part of the students, and getting seats in desired courses is a rather ad-hoc process. One approach is to first collect information from students about which courses they want to take and to develop optimization models for assigning students to available seats in a fair manner. What makes this problem complex is that the courses themselves have time conflicts, and the students have credit caps (an upper bound on the number of courses they would like to enroll in). We model this problem as follows. We have $n$ agents (students), and there are ``resources'' (these correspond to courses). Each agent is only interested in a subset of the resources (courses of interest), and each resource can only be assigned to a bounded number of agents (available seats). In addition, each resource corresponds to an interval of time, and the objective is to assign non-overlapping resources to agents so as to produce ``fair and high utility'' schedules.

In this model, we provide a number of results under various settings and objective functions. Specifically, in this paper, we consider the following objective functions: total utility, max-min (Santa Claus objective), and envy-freeness. The total utility objective function maximizes the sum of the utilities of all courses assigned to students. The max-min objective maximizes the minimum utility obtained by any student. Finally, envy-freeness ensures that no student envies another student's allocation.
Under these settings and objective functions, we show a number of theoretical results. Specifically, we show that the course allocation under the time conflicts problem is NP-complete but becomes polynomial-time solvable when given only a constant number of students \emph{or} all credits, course lengths, and utilities are uniform. Furthermore, we give a near-linear time algorithm for obtaining a constant $ 1/2$-factor approximation for the general maximizing total utility problem when utility functions are binary. In addition, we show that there exists a near-linear time algorithm that obtains a $1/2$-factor approximation on total utility and a $1/4$-factor approximation on max-min utility when given uniform credit caps and uniform utilities. For the setting of binary valuations, we show three polynomial time algorithms for $1/2$-factor approximation of total utility, envy-freeness up to one item, and a constant factor approximation of the max-min utility value when course lengths are within a constant factor of each other. Finally, we conclude with experimental results that demonstrate that our algorithms yield high-quality results in real-world settings. 
\end{abstract}

\section{Introduction}

This work addresses a central  problem in fair resource allocation in the course allocation setting. In the algorithms community, one of the fairness objectives is to allocate resources among agents to maximize the minimum allocation to any single agent, also known as ``Santa Claus'' problem. In the course allocation setting, there are additional constraints to the Santa Claus problem, such as a ``conflict'' graph between the resources, in other words, if there is a conflict edge between two resources, then we cannot allocate that pair of resources to the same agent. Our study was motivated by the course allocation scenario since massive surges in enrollments in CS courses have led to a crisis in several computer science departments - not only is the demand for certain courses extremely high from majors, but the demand from non-majors is also very high. Much of the time, this leads to significant frustration on the part of the students who are unable to get into courses of interest, and this lead to non-uniformity in student happiness as a few students were able to successfully petition faculty to add them to their course, and other students failed to get into any course of interest (leading to further annoyance when finding out that you did not get in, but your friend did).  As registration opens up, there is always a mad scramble to enroll in courses.  Given the amount of money spent by students on fees, and due to the scale of the problem, we set out to collect the information from students about which courses they want to take, and then developed optimization models for assigning students to available seats. What makes this problem complex is that courses themselves have time conflicts, so a student might be interested in two courses, but if they meet at overlapping times, they can only take one of those courses. Moreover, students have credit caps, that limit how many courses a student can enroll in, and naturally, courses have limited capacity. Students specify a set of courses that they are interested in, and we care about total utility (assigned seats), as well as fairness measured by both the lowest allocation to any student in an assignment and envy-freeness. 

While our motivating example was assigning seats to students in a fair manner, this is a pretty general resource allocation problem with some additional constraints capturing conflicts among courses and capacity constraints of students. We represent the conflict using a \textit{conflict graph} where resources are the nodes and an edge between two resources implies that those two resources cannot be assigned to the same student. 


The problem when the conflict graph is unrestricted is NP-hard (\cref{b-match}). 
Thus, we focus on the case of assigning resources that can be represented as intervals. Each interval has a start and end time. We assume that time occurs in discrete integer time steps in increments of $1$ beginning with step $0$. Overlapping intervals are those that strictly overlap (an interval ending at time 3 does not overlap with another interval that starts at 3). The conflict graph is now determined by the overlapping structure: if two resources (intervals) overlap in time, then there is an edge between them in the corresponding conflict graph.


\subsection{Related Work}\label{sec:related}
The problem of allocating resources among a set of $n$ agents with an egalitarian objective (maximizing the total value of items allocated to the worst-off agent) has been well-studied in the literature and is known as the Santa Claus problem. This problem was introduced by Bansal and Sviridenko~\cite{BansalSviridenko06} and they developed a $O(\log\log{n}/\log{} \log{} \log{n})$ approximation algorithm. Later, Davies et al.~\cite{davies2020tale} improved it to a $(4 + \epsilon)$-approximation. More recently, Chiarelli et al.~\cite{chiarelli2022fair} considered the Santa Claus problem assuming conflicting items represented by a conflict graph. They analyzed the NP-hardness of the problem for specific subclasses of conflict graphs and provided pseudo-polynomial solutions for others. Our work complements their results by providing constant approximate (polynomial time) solutions for interval graphs with uniform and binary valuations for course allocation.

Another well-studied fairness criterion in the fair division literature is envy-freeness~\cite{fisher2002resource}, where every agent values her allocation at least as much as she values any other agent’s
allocation. However, envy-freeness does not translate well when the items to be allocated are indivisible (for example, if there is one indivisible item and two students, the item can be allocated to only one student, and the other student would envy). Thus, for indivisible items (such as course seats), an appropriate fairness criterion is envy-freeness up to one item (EF1), defined by Budish~\cite{budish2011combinatorial}. Prior works have shown that an EF1 allocation always exists while allocating non-conflicting budgeted courses~\cite{budish2011combinatorial}, under submodular valuations~\cite{lipton2004approximately}, under cardinality constraints~\cite{biswas2018fair},  conflicting courses with monotone submodular valuations and binary marginal gains over the courses~\cite{benabbou2021finding,viswanathan2022yankee}, and many more.  However, these results do not consider interval graphs to model conflicting courses and thus, the existing EF1 solutions cannot solve the fair course allocation problem that we consider. Recent work by Hummel et al.~\cite{hummel2022fair} explored the allocation of conflicting items with EF1 fairness criteria. They showed the existence of EF1 for conflict graphs with small components and refuted the existence of EF1 when the maximum degree of the conflict graph is at least as much as the number of agents. Moreover, they provided a polynomial time EF1 solution when the conflict graph consists of disjoint paths and the valuations are binary. Our work extends their results by providing a polynomial time EF1 solution for interval graphs with binary valuations, which are more general than disjoint-path graphs and capture conflicts between courses. 

Fair allocation of intervals has been studied in job scheduling problems, where each job is represented as an interval (with a starting time, deadline, and processing time) and is required to be allocated to machines such that the same machine is not scheduled to run another job at the same time. Fairness notions considered are in terms of load balancing~\cite{ajtai1998fairness}, waiting time envy freeness~\cite{bilo2014price}, completion time balancing~\cite{im2020fair}, and EF1 among machines~\cite{minmingli}. However, these papers allow flexible time intervals, which cannot capture conflicts as graph edges and represent a different problem from our work. 

Other related techniques to our fair course allocation problem include equitable coloring~\cite{bodlaender2005equitable,kierstead2010fast,GOMES202159}, bounded max coloring~\cite{bonomo2011bounded,hansen1993bounded},
mutual exclusion scheduling~\cite{gardi2008mutual,jansen2003mutual,marecek2019semidefinite}, although most of these works are only tangentially related to our problem at hand. There have also been many works 
on approximation algorithms for various different types of conflict models~\cite{bodlaender1994scheduling,cai2018approximation,kowalczyk2017exact,mallek2022scheduling} and resource constrained scheduling~\cite{bendraouche2015scheduling} but none of these works operate in the specific
conflict graph and allocation model
studied in our paper.

\subsection{Summary of Contributions}
In this paper, we tackle the problem of \textit{fair} allocation of conflicting resources. We prove that a general version of the problem is NP-hard via a reduction from the independent set problem in~\cref{b-match}. This motivates the study of a specific class of conflict graphs, namely interval graphs, which capture the course allocation problem. For interval graphs, we provide polynomial time algorithms to obtain a fair allocation. We establish that, oftentimes, \textit{simple algorithms are enough to provide multiple guarantees in terms of efficiency and fairness}, specifically, a round robin approach is often sufficient. Figure~\ref{fig:overview-of-results} summarizes our results. Our main results are: 

\begin{itemize}
    \item We first consider \textit{uniform utilities} in \cref{sec:uniform} and show that the course allocation under the time conflicts problem with the objective of maximizing social welfare is NP-complete in general. However, we develop polynomial-time solutions when there are a constant number of students \emph{or} when the credit caps and course lengths are uniform. We further provide solutions that have fairness guarantees, one of which satisfies envy-freeness up to any good (EFX) and the other achieves approximate maxi-min fairness. 
    \item We then investigate \textit{binary utilities and uniform credits for all courses} in \cref{sec:bin} and develop a $(1/2)$-approximate solution for the course allocation problem under the time conflicts problem with the objective of maximizing social welfare. We further provide solutions that have fairness guarantees, one of which satisfies envy-freeness up to one good (EF1) and the other achieves approximate max-min fairness. 
   
    \item Our experimental evaluation demonstrates that our algorithms yield near-optimal solutions on synthetic as well as real-world university datasets.
\end{itemize}

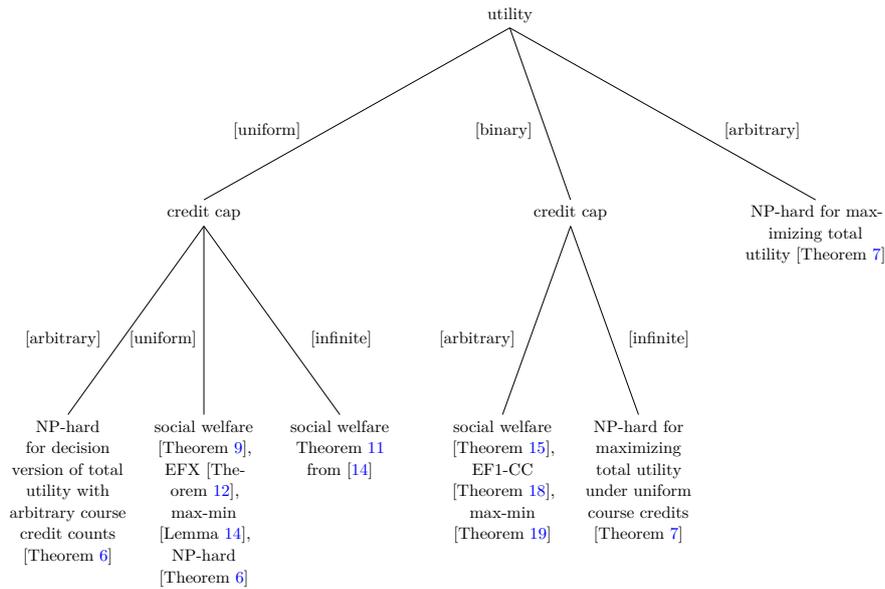
\begin{figure}[!htbp]
\centering
\resizebox{12cm}{!}{
\begin{tikzpicture}
\tikzset{text width = 70pt, align=center}
\tikzset{level 1/.style={level distance=120pt, sibling distance = 5pt}}
\tikzset{level 2/.style={level distance=130pt, sibling distance = 5pt}}
\tikzset{level 3/.style={level distance=120pt, sibling distance=30pt}}
\tikzset{every level 1 node/.style={text width=95pt}}
\tikzset{every level 2 node/.style={text width=70pt}}
\tikzset{every level 3 node/.style={text width=100pt}}

\Tree
[.utility
  \edge node[pos=0.6, left, xshift=3pt] {[uniform]};
  [.{credit cap}
    \edge node[pos=0.6, left, xshift=3pt] {[arbitrary]};
    [.{NP-hard for decision version of total utility with arbitrary course credit counts [\cref{thm:uni-utility-arb-cred-cap-hardness}}] ]
    \edge node[left, pos=0.6, xshift=14pt] {[uniform]};
    [.{social welfare [\cref{thm:unif-creds-lengths-utilities}],  EFX [\cref{thm:EFX-unif-cred-caps-utilities}], max-min [\cref{lem:max-min-guarantee-2}], NP-hard [\cref{thm:uni-utility-arb-cred-cap-hardness}]} ]
    \edge node[pos=0.6, right, xshift=-5pt] {[infinite]};
    [.{social welfare \cref{thm:no-cred-caps-unif-utilities} from \cite{CARLISLE1995225}} ]
  ]
  \edge node[pos=0.6, left, xshift=13pt] {[binary]};
  [.{credit cap}
    \edge node[pos=0.6, left, xshift=7pt] {[arbitrary]};
    [.{social welfare [\cref{thm:1/2-binary-utility-arb-credit-cap-credits}], EF1-CC [\cref{thm:round-robin-augmented-ef1}], max-min
    [\cref{thm:max-min-binary}]} ]
    \edge node[pos=0.6, right, xshift=-10pt] {[infinite]};
    [.{NP-hard for maximizing total utility under uniform course credits [\cref{thm:CAUTC-hardness}]} ]
  ]
  \edge node[pos=0.6, right, xshift=3pt] {[arbitrary]};
  [.{NP-hard for maximizing total utility [\cref{thm:CAUTC-hardness}]} ] 
  ]
\end{tikzpicture}
}
\caption{Overview of results.}
\label{fig:overview-of-results}

\end{figure}

\section{Preliminaries}\label{sec:prelims}
In this section, we define our problem as well as the necessary concepts for our results. We first define 
our main problem which we call 
the Course Allocation Under Time Conflicts problem (CAUTC). This problem describes an issue almost all universities face: given a set of courses that have meeting times during the week and student preferences over these courses, what is the best way to assign these courses to students? Each course has a seating capacity, after all. From a university's perspective, filling seats has value (maximizing utility), but we have to balance that with a fairness aspect as well.

\subsection{Course Allocation under Time Conflicts Model}\label{sec:model}
We consider the problem of allocating a set of $m$ courses among a set of $n$ students. Let $\mathcal{N}$ be the set of students and $\mathcal{M}$ be the set of courses. Courses in $\mathcal{M}$ have indices in $\mathcal{M}$. Each student $i\in\mathcal{N}$ has a non-negative utility for each course $j\in\mathcal{M}$; this utility is denoted by $u_{i}(j)\geq 0$. $C_i$ represents the maximum number of credits, a student $i$ can take. Each course $j$ has a certain number of credits indicated by $c_j$, a seat capacity of $s_j$ for each $j \in \mathcal{M}$, a start and end time, represented by the tuple $(\starttime_j, \etime_j)$ 
and a duration $d_j$ (in units consisting of discrete time steps). Finally, each course $j$ 
is associated with a seat count $s_j$.
Therefore, the restrictions are:

\begin{itemize}
    \item A student $i \in \mathcal{N}$ can be matched to courses with the total credits at most $C_i$ (\textit{credit cap}).
    \item A course $j \in \mathcal{M}$ can be allocated to at most $s_j$ students.
    \item No student can be allocated a pair of courses that overlap in time.
\end{itemize}

Although we define the problem in the most general form, for the rest of this paper, we set $c_j = 1$ for all courses.
Furthermore, we reduce to the equivalent problem where we make a copy of the course for each seat and create an interval
with the same start and end time for each seat of the course. Via this reduction, we also set $s_j = 1$ for all courses.

The course schedule can be represented as an interval graph.
We illustrate such a configuration in~\cref{fig:course-interval-graph}.\\ 

\begin{figure}[htbp]
  \centering
  \includegraphics[width=0.9\textwidth]{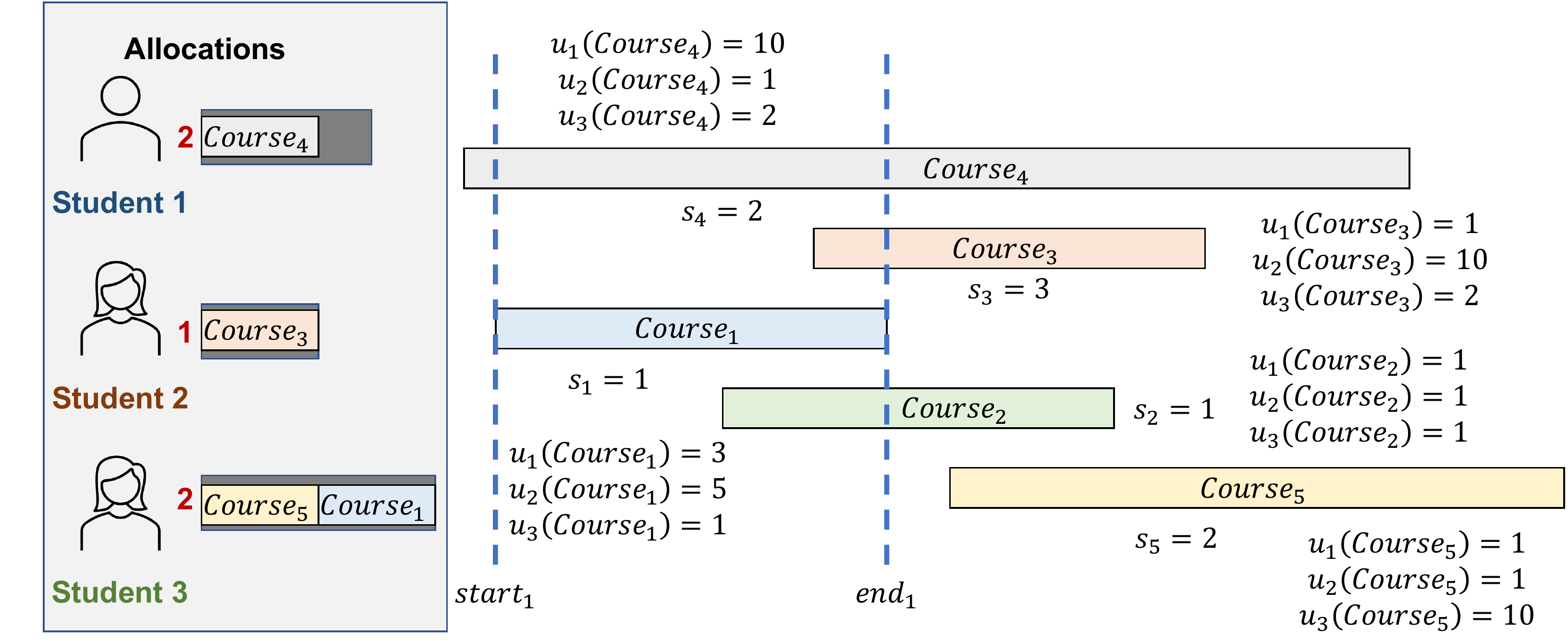}
  \caption{An CAUTC instance with $3$ students and $5$ courses, with one seat per course. All courses conflict with each other except for Course$_1$ and Course$_5$. The red numbers students indicate the credit caps for students. The allocation represents a solution for CAUTC-SW (\cref{def:cautcsw}).}\label{fig:course-interval-graph}
\end{figure}

\subsection{Fairness Measures}
We first consider the problem of finding an allocation that maximizes the social welfare (total sum of utilities of all the students based on the courses allocated) subject to all the feasibility and non-conflicting constraints. We call this maximization problem \CAUTCSW.


\begin{definition}[\CAUTCSW]\label{def:cautcsw}
Given a set of students $\mathcal{N}$, a set of courses $\mathcal{M}$, and the set of utility functions $\mathcal{U}$, \CAUTCSW is the assignment of courses to students such that the social welfare is maximized and
the constraints of CAUTC are satisfied.
\end{definition}

In addition to maximizing social welfare, we also consider a number of common fairness measures as constraints. We first define them here but will slightly modify some of these definitions in their respective sections later on in
this paper.

We first define the concept of \emph{envy-free up to any
good} (EFX). Informally, EFX means that if any agent A were to be envious of any agent B, then A would no longer be envious if any one item were to be removed from agent B's allocation. 

\begin{definition}[Envy-Free Up to Any Good (EFX)]\label{def:any-good}
For all students $i \in \mathcal{N}$, if there exists an $i' \in \mathcal{N}$ such that $u_i(A_{i'}) > u_i(A_i)$, then for all items $x \in A_{i'}$, it follows that $u_i(A_{i'} \setminus {x}) \leq u_i(A_i)$ or $C_i = \sum_{j \in A_i} c_j$ (student $i$ has reached their credit cap), where $A_k$ denotes the allocation of courses to student $k$.
\end{definition}

A slightly weaker version of EFX is \emph{envy-free up to one
good} (EF1), defined below. Informally, EF1 means that if any agent A were to be envious of any agent B, then A would no longer be envious if a particular item were to be removed from agent B's allocation. 

\begin{definition}[Envy-Free Up to One Good (EF1)]\label{def:one-good}
For all students $i \in \mathcal{N}$, if there exists an $i' \in \mathcal{N}$ such that $u_i(A_{i'}) > u_i(A_i)$, then there exists an item $a \in A_{i'} $ satisfying $ u_i(a)>0$, such that $u_i(A_{i'} \setminus {a}) \leq u_i(A_i)$
or $C_i = \sum_{j \in A_i} c_j$ (student $i$ has reached their credit cap), where $A_k$ denotes the allocation of courses to student $k$.
\end{definition}

The problem with only ensuring EF1 is that there is a trivial allocation of courses consisting of giving everyone one course only or no courses. Such an allocation is EF1 since no one envies anyone else by more than one course. However, such an allocation is not a very useful allocation most students would not receive as many courses as they want and there will be many remaining courses. Thus, we need a better measure of envy. A definition from \cite{minmingli} resolves this problem. Suppose all unassigned courses in each iteration were donated to a dummy student, the \defn{charity}, who is unable to envy anyone, but students are able to envy the charity. Then, having the charity resolves the issue of trivial solutions. Specifically, any student $i$ can envy the charity by considering the maximum independent set among the courses in the charity that are desired by $i$. If such a maximum independent set is larger than the number of courses allocated to $i$, then $i$ envies the charity. We formally define EF1 Considering Charity (EF1-CC) to be our new notion of envy below. 

\begin{definition}[Envy-Free Up to One Good Considering Charity (EF1-CC)]\label{ef1-cc}
Any student $i$ who has reached their credit cap (i.e. $C_i = \sum_{j \in A_i} c_j$) does not 
envy anyone else. 
For all other students $i, i' \in \mathcal{N}$ (who have 
not reached their credit caps) and given an 
allocation $\mathcal{A} = (A_1, \ldots, A_i, \ldots A_n)$ of courses, it holds that
$|\{j \mid u_i(j) > 0, j \in A_i\}| \geq |\{j \mid u_i(j) > 0, j \in A_{i'}\}|-1$. Let $D$ be the 
set of courses that are unassigned and held by a 
dummy student defined as the \defn{charity}. 
Let $MIS_i = MIS(\{j \mid u_i(j) > 0, j \in D\})$ be the 
maximum independent set of courses in $D$ that are
desired by student $i$. Then, for all students $i \in \mathcal{N}$, it holds that $|\{j \mid u_i(j) > 0, j \in A_i\}| \geq |MIS_i| - 1$.
\end{definition}

Finally, we consider a Santa Claus fairness objective which is
to maximize the minimum allocation of courses to any
student. For simplicity, we denote this problem as {CAUTC-SC}. 

\begin{definition}[CAUTC-SC]\label{cautc-sc}
    Determine an allocation of courses to students 
    $\mathcal{A} = (A_1, \ldots, A_n)$ that maximizes
    the minimum utility of any student subject to the constraints of CAUTC. Namely, we
    seek to satisfy the following objective
    $\max_{\mathcal{A}}\left(\min_{i \in \mathcal{N}}
        \left(\sum_{j \in A_i} u_i(j)\right)\right)$.
\end{definition}

\section{Uniform Utilities for Courses}\label{sec:uniform}

In this section, we discuss the setting where all 
students have equal, uniform preferences for all 
courses. In other words, in this section, all students
have preference $1$ for every course. In this setting,
we show a number of hardness, social welfare,
and fairness results described in the following 
sections. 

\subsection{Hardness of \CAUTCSW under Uniform Utilities}

We show that \CAUTCSW is NP-hard (\cref{thm:uni-utility-arb-cred-cap-hardness}). Subsequently, we consider some variants of the problem that are polynomial-time solvable in the following sections. 

\begin{theorem}\label{thm:uni-utility-arb-cred-cap-hardness}
The \CAUTCSW problem where the utilities are uniform, credit caps are uniform, course are non-overlapping, and number of 
credits for each course is non-uniform and arbitrary is NP-hard.
\end{theorem}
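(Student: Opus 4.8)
The plan is to prove NP-hardness by reducing from \twowaypartition (the \textsc{Partition} problem). First I would fix the decision version of \CAUTCSW: given a target $W$, decide whether there is a feasible assignment (respecting credit caps, seat capacities, and the non-overlap constraint) whose social welfare is at least $W$. Membership in NP is immediate, since a proposed assignment is verified in polynomial time by summing the credits held by each student, counting the students assigned to each course, and checking that no student holds two overlapping courses; hence the crux is hardness.

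Given a \twowaypartition instance --- positive integers $a_1, \dots, a_k$ with $\sum_j a_j = 2T$ --- I would build a \CAUTCSW instance with exactly two students, each with the uniform credit cap $C_1 = C_2 = T$, and $k$ courses, where course $j$ has credit $c_j = a_j$, seat capacity $s_j = 1$, and unit utility $u_i(j) = 1$ for both students $i$. I place the $k$ courses in pairwise disjoint time slots (e.g.\ course $j$ on the interval $[2j, 2j+1]$) so that the courses are non-overlapping and the conflict constraint is vacuous; this is exactly the regime the theorem asks about, and it emphasizes that the hardness stems from the arbitrary credit values, \emph{not} from scheduling conflicts. The target is $W = k$.

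For correctness I would argue the two directions. If $(S_1,S_2)$ is a valid partition with $\sum_{j \in S_\ell} a_j = T$, then assigning the courses of $S_\ell$ to student $\ell$ uses exactly $T$ credits per student (feasible) and places every course, yielding social welfare $k$. Conversely, since each course has a single seat and unit utility, the maximum attainable welfare is $k$, reached only when all $k$ courses are assigned; in that case the total credit handed out is $\sum_j a_j = 2T$, and since each of the two students can absorb at most $T$ credits, each must receive \emph{exactly} $T$, which is precisely a valid partition. Thus the \twowaypartition instance is a yes-instance iff the constructed instance admits welfare $k$, and the reduction is clearly polynomial.

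The main subtlety to address is the \emph{strength} of the hardness. The \twowaypartition reduction encodes the numbers $a_j$ directly as credit values, so it establishes only weak NP-hardness and relies on the credits being large (``arbitrary''), which is consistent with a pseudo-polynomial dynamic program that solves the constant-student case. If strong NP-hardness is desired --- and to keep the result meaningful even when credits are polynomially bounded --- I would instead reduce from \threepartition, introducing one student (cap $T$) per target triple and one course per input number; the identical counting argument shows that welfare equals the number of courses iff the numbers split into triples each summing to $T$. This variant uses a non-constant number of students and therefore does not conflict with the polynomial-time algorithm obtained for a constant number of students.
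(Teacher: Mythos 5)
Your proposal is correct, but your primary reduction takes a genuinely different route from the paper's. The paper proves the theorem by reducing from \threepartition: it creates $m$ students with uniform credit cap $T$, one non-overlapping course of credit $x_i$ per input number, and asks for total utility $k = mT$; since each student can absorb at most $T$ credits and the bounds $T/4 < x_i < T/2$ hold, a yes-instance forces every student to receive exactly three courses whose credits sum to exactly $T$. Your main reduction instead uses \twowaypartition with two students, which is simpler but establishes only weak NP-hardness --- sufficient for the theorem as literally stated (Partition is NP-complete), but a strictly weaker conclusion than the paper's, whose \threepartition reduction yields strong NP-hardness, i.e., hardness even when all credit values are polynomially bounded. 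Your closing remark anticipates exactly this and sketches the \threepartition variant, which coincides with the paper's actual construction and uses the same counting argument: maximum welfare is attained only if every course is placed, and total capacity equals total credits, so each student's load must equal $T$ exactly. Two minor differences are worth noting: you use unit utilities with target welfare equal to the number of courses, whereas the paper's target $k=mT$ implicitly values each course by its credit count (still ``uniform'' in the paper's sense that $u_i(j)=u_{i'}(j)$ for all students $i,i'$), and your version arguably instantiates the hypothesis more cleanly; also, your observation that weak hardness for two students is consistent with the pseudo-polynomial dynamic program of \cref{thm:constant-students-DP} (which requires $O(1)$ credit counts to run in polynomial time) is a useful sanity check that the paper does not make explicit.
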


We prove this via a reduction from the 3-partition problem (~\cref{app:uni-utility-arb-cred-cap-hardness}).

\begin{theorem}\label{thm:CAUTC-hardness}
The \CAUTCSW problem where utilities are binary, credit caps are infinite, and number of credits for each course is uniform is NP-hard.
\end{theorem}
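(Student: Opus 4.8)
The plan is to show that, in this regime, \CAUTCSW is nothing but a maximum \emph{list coloring} problem on an interval graph, and then to invoke the known NP-completeness of list coloring (equivalently, precoloring extension) on interval graphs. First I would unpack what the constraints become here. With binary utilities, infinite credit caps, and uniform course credits, the credit constraint disappears entirely, so the only surviving conditions are that each course seat goes to at most one student and that no student is given two strictly overlapping courses. After the model's reduction to unit-credit, single-seat interval courses, an allocation just assigns each course to at most one student who values it ($u_i(j)=1$), subject to the rule that the set of courses handed to any fixed student forms an independent set in the interval graph. Identifying each student $i$ with a color and each course $j$ with an interval/vertex whose admissible list is $L(j)=\{\, i : u_i(j)=1 \,\}$, assigning $j$ to $i$ is exactly choosing a color from $L(j)$; the ``no overlapping courses per student'' rule is exactly the properness condition (each color class is an independent set, so no two intersecting intervals share a color); and the social welfare equals the number of colored vertices. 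Since there are $m$ courses each worth at most $1$, the optimum social welfare equals $m$ if and only if the interval graph admits a proper list coloring with the lists $L(\cdot)$.

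This observation yields a polynomial-time reduction from list coloring on interval graphs. Given an interval graph $G$ on $m$ vertices with an integer-endpoint representation and color lists $L(v)\subseteq\{1,\dots,k\}$, I would build a \CAUTCSW instance with $k$ students (the set $\mathcal{N}$) and one course per interval $v$ (the set $\mathcal{M}$): set $u_i(v)=1$ iff $i\in L(v)$ and $0$ otherwise, give every course unit credit and one seat, and set all credit caps to infinity. To respect the paper's half-open overlap convention (intervals sharing only an endpoint do not conflict), I would convert each closed interval $[a_v,b_v]$ into the course interval $(\starttime_v,\etime_v)=(a_v,b_v+1)$; then two courses strictly overlap precisely when the original intervals intersect, so the conflict graph of the constructed instance is exactly $G$. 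By the reformulation above, the constructed instance attains social welfare $m$ if and only if $G$ is list-colorable. Because an allocation can never exceed welfare $m$, deciding whether the optimum equals $m$ is equivalent to deciding list-colorability, so even the decision version of \CAUTCSW is at least as hard as list coloring on interval graphs.

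Finally, list coloring---indeed already its special case precoloring extension, in which each vertex's list is either a singleton or the full palette $\{1,\dots,k\}$---is NP-complete on interval graphs (Bir\'o, Hujter, and Tuza). Composing this with the reduction above establishes that \CAUTCSW is NP-hard in the stated regime. I expect the two things that need care to be (i) confirming and citing the hardness of the source problem in the setting where the number of colors $k$ is part of the input (the regime we need, since $k$ may grow with the instance), and (ii) the endpoint bookkeeping that guarantees the strict-overlap conflict graph of the courses coincides with $G$ under the paper's convention; both are routine once the correspondence is fixed. The genuine crux---and the only nonroutine step---is the observation that binary utilities together with infinite credit caps collapse the allocation problem onto proper coloring with admissibility lists, after which NP-hardness follows immediately from existing interval-graph coloring results.
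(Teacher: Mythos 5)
Your proof is correct, but it takes a genuinely different route from the paper. Both arguments rest on the same core insight---that with binary utilities and infinite credit caps, students act as colors and the utility-$1$ sets act as admissible color lists, so \CAUTCSW becomes a maximum (partial) list coloring problem on an interval graph---but you then invoke the NP-completeness of list coloring (via its special case, precoloring extension) on interval graphs due to Bir\'o, Hujter, and Tuza, whereas the paper performs a direct reduction from \arccoloring, the $k$-coloring problem for circular-arc graphs, following the circle-cutting construction that Brucker and Nordmann used for the $k$-track assignment problem: the circle is cut at a point, each of the $r \leq k$ arcs crossing the cut is identified with its own student whose binary utilities forbid exactly the courses overlapping that arc, and the remaining $k-r$ students desire everything, so that completing the assignment of the remaining intervals is equivalent to $k$-coloring the circular-arc graph. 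Your approach buys modularity and brevity---the identification with list coloring is clean, your endpoint bookkeeping is right, and the cited hardness does hold in the regime where the number of colors is part of the input---while the paper's approach is more self-contained, relying only on the classical circular-arc coloring hardness of Garey, Johnson, Miller, and Papadimitriou rather than the more specialized precoloring-extension literature. It is worth noting that the two proofs are at bottom the same: the Bir\'o--Hujter--Tuza NP-completeness proof for precoloring extension on interval graphs itself proceeds by cutting a circular-arc instance, so your reduction essentially outsources to a citation the very construction the paper carries out explicitly.
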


We prove this via a reduction from the $k$-coloring problem for circular-arc graphs. The complete proof is in~\cref{app:CAUTC-hardness}.

\subsection{Maximizing Social Welfare}
In this section, we show that, for some more 
restricted settings, the \CAUTCSW problems are 
polynomial-time solvable. We first show that when
given a constant number of students, we can efficiently solve
the most general form of the problem with no
restrictions on either the credit caps or the number of credits for each course, and with arbitrary preferences for each student.

\begin{algorithm}[H]
\caption{Round Robin Algorithm for \CAUTCSW}\label{alg:round-robin}
\begin{algorithmic}[1]
\Require{Set of students $\mathcal{N}$, set of courses $\mathcal{M}$, uniform (unit) utilities} 
\Ensure{Assignment of courses to students.}
\Function{RoundRobin}{$\mathcal{N}$, $\mathcal{M}$}
\State Sort $\mathcal{M}$ chronologically by earliest finish time.\label{line:rr-finish-time}
\State Initialize student assignments $\mathcal{A}$ to empty sets. \Comment{each student starts out with no courses}
\For{course $j \in \mathcal{M}$ in sorted order}\label{line:rr-iterate}
        \State Let $T = \left\{s \mid |A_s| < C_s, \text{no course in $A_s$ conflicts with $j$}\right\}$.\label{line:rr-set-t}
        \If{$|T| > 0$}
            \State Let $s = \min_{s' \in T} \left(|A_{s'}|\right)$ (breaking ties by student index).\label{line:rr-min-classes}
            \State Update $A_s = A_s \cup \{j\}$ \Comment{Assign course $j$ to student $s$} \label{line:rr-assign-class}
        \EndIf
\EndFor
\State \Return $\mathcal{A}$
\EndFunction
\end{algorithmic}
\end{algorithm}

\begin{theorem}\label{thm:constant-students-DP}
\CAUTCSW is polynomial-time solvable when there are only a constant number of students and credit counts for courses
can be distinct but are each $O(1)$. 
\end{theorem}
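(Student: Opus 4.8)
The plan is to give a dynamic program that sweeps over the courses in order of finish time, maintaining for each student a compact summary of their partial allocation. First I would sort the $m$ courses so that $\etime_1 \le \etime_2 \le \cdots \le \etime_m$ and process them one at a time. The crucial structural observation, which uses that the conflict graph is an interval graph, is that when courses are scanned in finish-time order a newly considered course $[\starttime_j,\etime_j)$ is compatible with everything already assigned to student $i$ if and only if it does not conflict with the \emph{latest-finishing} course currently held by $i$: all other courses of $i$ finish no later and are pairwise non-overlapping by induction, so $\starttime_j \ge f_i$ (where $f_i$ is that latest finish time) already guarantees non-overlap with every course in $A_i$. Consequently the only information about a student's history that affects future decisions is the pair $(f_i, w_i)$, where $f_i$ is the finish time of their last-assigned course (set to $0$ if none) and $w_i = \sum_{j \in A_i} c_j$ is the number of credits consumed so far.

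I would then let the DP state be the tuple $\big((f_1,w_1),\dots,(f_n,w_n)\big)$ and store, for each state, the maximum social welfare achievable by some feasible partial allocation realizing it. Each $f_i$ takes one of at most $m+1$ finish times, and each $w_i$ ranges over $\{0,1,\dots,\sum_j c_j\}$. This is exactly where the hypotheses enter: writing $\gamma = \max_j c_j = O(1)$, we have $w_i \le \gamma m = O(m)$, so each student contributes only $O(m^2)$ local states, and because $n = O(1)$ the full state space has size $O(m^{2n})$, which is polynomial. (Without the $O(1)$-credit assumption the credit coordinate would range over subset sums and the table would only be pseudo-polynomial, consistent with the NP-hardness of the non-uniform-credit case in \cref{thm:uni-utility-arb-cred-cap-hardness}.)

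The transition processes course $j$ against every stored state and takes the better of two choices: (i) leave $j$ unassigned, keeping the state unchanged; or (ii) assign $j$ to some student $i$, which is permitted only when $\starttime_j \ge f_i$ (no conflict) and $w_i + c_j \le C_i$ (credit cap), updating $i$'s local state to $(\etime_j,\, w_i + c_j)$ and increasing the value by $u_i(j)$. Taking the maximum over these $O(n)$ options for every state and every course gives total running time $O(n \cdot m \cdot m^{2n}) = \poly(m)$ for constant $n$, and the answer is the largest value stored after all courses are processed. The seat-capacity constraint is handled by the reduction to $s_j = 1$ from \cref{sec:model} (which adds at most $\min(s_j,n)=O(1)$ copies per course and so keeps $m$ polynomial), while arbitrary utilities and arbitrary caps $C_i$ pose no difficulty since they enter only additively in the value and as the threshold in the feasibility test.

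Correctness follows from the usual optimal-substructure argument: $(f_i,w_i)_i$ is a sufficient statistic, in that two partial allocations reaching the same state admit exactly the same feasible completions with the same incremental values, so keeping only the best value per state loses nothing. The one place demanding care is the structural claim that a single finish time per student certifies non-conflict; I would prove this by induction on the number of processed courses, using the finish-time ordering and the strict-overlap convention (an interval ending at $t$ does not conflict with one starting at $t$). I expect this interval-structure argument, together with the state-space bookkeeping that pins down precisely why both $c_j = O(1)$ and $n = O(1)$ are needed for polynomiality, to be the main (though largely routine) content of the proof.
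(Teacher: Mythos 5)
Your proposal is correct and takes essentially the same approach as the paper's proof: a sweep dynamic program over the courses in a fixed temporal order, maintaining for each student a time frontier and a credit counter, with the state space polynomial precisely because $n = O(1)$ and $c_j = O(1)$. The only cosmetic differences are that the paper sorts by start time and tracks remaining credits, whereas you sort by finish time and track consumed credits, yielding the same $O(m^{2n+1})$-style bound and the same optimal-substructure correctness argument.
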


The proof of~\cref{thm:constant-students-DP} can be found in~\cref{constant-students-DP}.

\begin{theorem}\label{thm:unif-creds-lengths-utilities}
\cref{alg:round-robin} solves \CAUTCSW in $O((n+m)\log n)$ time when there are (1)~uniform credits for all courses, i.e.\ $c_j = c_{j'}$ for all $j, j' \in \mathcal{M}$, (2)~uniform course lengths, i.e., $d_j = d_{j'}$ for all $j, j' \in \mathcal{M}$, and (3)~uniform utilities i.e., $u_i(j)=u_{i'}(j)$ for 
all $i, i' \in \mathcal{N}$. 
\end{theorem}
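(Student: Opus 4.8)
The plan is to show that the schedule produced by \cref{alg:round-robin} assigns the maximum possible number of courses; since all utilities are uniform (equal to $1$) and all course credits $c_j$ are equal, the social welfare is then exactly this number, so optimal count implies optimal welfare. Under these uniformity assumptions each student is constrained only by the \emph{number} of courses assigned (the credit cap degenerates to a cardinality bound), so \CAUTCSW reduces to: assign as many equal-length intervals as possible to the students so that each student receives a set of pairwise non-overlapping courses of size at most its cap. I would prove optimality by an exchange (``greedy stays ahead'') argument. Starting from an arbitrary optimal assignment $O$, I process the courses in the earliest-finish-time order used by the algorithm (\cref{line:rr-finish-time}) and gradually edit $O$ into the algorithm's output $G$, never decreasing the number of assigned courses and keeping $O$ feasible, while maintaining the invariant that after step $t$ the edited $O$ agrees with $G$ on the first $t$ courses. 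This forces $|G| = |O| = \mathrm{OPT}$.

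The two facts that make the ordering and the editing work are both consequences of \emph{uniform course length}. First, sorting by finish time coincides with sorting by start time, and a student $s$ is available for the current course $j$ if and only if the latest-finishing course already placed on $s$ finishes no later than $\starttime_j$; this is exactly the feasibility test of \cref{line:rr-set-t}, and it lets me reason about conflicts locally. Second, and crucially, the set of courses that conflict with $j$ and finish no earlier than $j$ all have start times inside a single half-open window whose length equals the common course length, hence they pairwise conflict and form a clique; therefore every student holds \emph{at most one} course conflicting with $j$. With these in hand the inductive step splits into cases. If $G$ leaves $j$ unassigned, then every student is either at its cap or already holds a prefix-course conflicting with $j$, and since $O$ agrees with $G$ on the prefix the same obstruction holds in $O$, so $O$ cannot assign $j$ either. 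If instead $G$ assigns $j$ to a student $s$ while $O$ assigns it elsewhere (or not at all), I re-route $j$ onto $s$ in $O$, displacing the at-most-one conflicting course that $s$ carries and relocating it, using the balanced loads guaranteed by the minimum-load choice in \cref{line:rr-min-classes} together with the uniform caps to guarantee a free slot for the displaced course.

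I expect this re-routing to be the main obstacle: displacing a conflicting course from $s$ and placing it on another student can a priori trigger a new conflict and cascade. The resolution I anticipate is to organize the re-routing as a short augmenting exchange along the proper-interval (clique) structure and to argue it terminates without loss of assigned courses; this is precisely the point at which \emph{uniform credit caps} are genuinely used. With arbitrary caps the statement is in fact false --- for instance the three courses $[0,2],[1,3],[2,4]$ with caps $1$ and $2$ force the algorithm (by the minimum-load, break-ties-by-index rule) to fill the cap-$1$ student first and strand a course that an optimal solution places by using $[0,2]$ and $[2,4]$ together on the cap-$2$ student --- so the argument must rely on all caps being equal and on the minimum-load rule keeping the load vector balanced. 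Finally, for the running time I would keep the students in a priority queue keyed by current load, sweep the courses in finish order while returning each student to the ``available'' pool as soon as its last assigned course ends, and select the minimum-load available student in $O(\log n)$ time per course; together with the initial $O((n+m)\log n)$ sort this yields the claimed $O((n+m)\log n)$ bound.
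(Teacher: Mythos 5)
Your plan follows the same route as the paper's proof in \cref{sec:greedy-comes-first}: reduce to maximizing the number of assigned courses, process courses in earliest-finish order, and transform an optimal assignment $J$ into the greedy one by exchanges, with the equal-length assumption guaranteeing that the already-processed courses conflicting with the current course form a clique. The genuine gap is that the heart of this plan --- re-routing a displaced course without triggering a cascade --- is precisely the step you leave as ``anticipated'' rather than proved. The paper's version of this step is an explicit exchange: if $j$ is in $J$ but greedy gave student $i$ a conflicting course $j'$, swap $j'$ for $j$, and then argue that no third course $\ell$ becomes newly assignable to $i$, because any $\ell$ that conflicts with $j'$ but not with $j$ must (by equal durations) lie entirely before $j$, so $j'$ was never what blocked $\ell$; hence each greedy course is charged by at most one course of $J$ and no cascade occurs. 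Your sketch needs an argument of this kind, or a worked-out version of the augmenting exchange you allude to (one would have to show that the exchange graph built from two chains of equal-length intervals is a disjoint union of paths, and that the two students' residual capacities admit the required two-coloring); as written, the inductive step does not close.

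That said, your counterexample is correct, and it is a finding about the theorem itself, not merely about your own proof: \cref{thm:unif-creds-lengths-utilities} assumes uniform course credits, lengths, and utilities but places no restriction on the credit caps $C_i$, and your instance --- courses $[0,2],[1,3],[2,4]$ with caps $C_1=1$, $C_2=2$ --- satisfies every stated hypothesis, yet \cref{alg:round-robin} (with its break-ties-by-index rule) assigns only $2$ courses while the optimum assigns $3$. The paper's proof overlooks this exactly where you would expect: its Case 1 slides from ``student $i$ has no remaining credits'' to ``every student has reached their credit cap,'' and its no-cascade argument assumes that if greedy did not give a course $\ell$ to student $i$ then some course of $i$ conflicts with $\ell$, ignoring that the round-robin rule may simply have assigned $\ell$ to a different student (in your instance $[0,2]$ goes to the cap-$1$ student even though the cap-$2$ student is equally free). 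So the uniform-cap hypothesis you add is genuinely necessary; with it the statement is true and your outline can be completed, but that completion --- not the statement of the strategy --- is the real mathematical content, and it is missing from your proposal.
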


We prove~\cref{thm:unif-creds-lengths-utilities} via a variation of the greedy-comes-first strategy; we
present our full proof in~\cref{sec:greedy-comes-first}. When the durations of the courses are not uniform,
we can obtain a $(1/2)$-approximate allocation for \CAUTCSW.

\begin{lemma}\label{lem:max-min-guarantee}
There is a $O((n + m)\log n)$ time round-robin
algorithm for \CAUTCSW that obtains a $1/2$-approximation when there are (1)~$n$ students, (2)~uniform credit caps i.e. for any pair of students $i, i' \in \mathcal{N}$, we have $C_i = C_{i'}$, and (3)~uniform utilities i.e. for any pair of students $i, i' \in \mathcal{N}$ and jobs $j, j' \in \mathcal{M}$, we have $u_i(j)=u_{i'}(j')$. 
\end{lemma}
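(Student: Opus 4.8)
The plan is to prove that \cref{alg:round-robin}, which processes courses in earliest-finish-time order, always assigns at least half as many courses as an optimal assignment. Since every utility equals a common constant and $c_j = s_j = 1$, the social welfare of any feasible assignment is proportional to the number of assigned courses, so I may take each utility to be $1$ and measure an assignment by the set of courses it uses. Write $\widehat{A}$ and $\widehat{O}$ for the sets of courses assigned by the algorithm and by a fixed optimum, and $A_p$, $A_p^{*}$ for what each assigns to student $p$; the two objective values are $|\widehat{A}|$ and $|\widehat{O}|$, and for $j \in \widehat{O}$ let $\pi(j)$ be the student the optimum gives $j$ to. I would fix the target $|\widehat{O}| \le 2|\widehat{A}|$ and establish it by charging every course of $\widehat{O}$ to a course of $\widehat{A}$ in such a way that no course of $\widehat{A}$ is charged more than twice.

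First I would split the students by their final load in the algorithm, calling $p$ \emph{saturated} if $|A_p| = C$ and \emph{unsaturated} otherwise, and split $\widehat{O}$ into the courses whose optimal owner $\pi(j)$ is saturated and those whose owner is unsaturated. The saturated part needs only a counting bound: since the optimum gives each student at most $C$ courses, the number of optimal courses owned by saturated students is $\sum_{p \text{ saturated}} |A_p^{*}| \le \sum_{p \text{ saturated}} C = \sum_{p \text{ saturated}} |A_p|$, i.e.\ at most the number of \emph{algorithm} courses placed on saturated students, so these optimal courses inject into the algorithm's courses on saturated students.

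For the unsaturated part I would exploit the finish-time order. A course $j$ with unsaturated owner that the algorithm also assigns is charged to itself. If such a $j$ is not assigned, then at its iteration the candidate set $T$ was empty, so in particular its owner $p = \pi(j)$ was unavailable; because $p$ is unsaturated and loads only grow, $p$ was below its cap throughout, so the obstruction must be a conflict, i.e.\ some already-assigned course $b(j) \in A_p$ overlaps $j$. The earliest-finish order forces $\etime_{b(j)} \le \etime_j$, and overlap then gives $\starttime_j < \etime_{b(j)} \le \etime_j$. The key structural fact is that the optimum places pairwise non-overlapping courses on $p$, and two non-overlapping intervals cannot both satisfy $\starttime < \etime_{b(j)} \le \etime$ (the earlier one would finish before $\etime_{b(j)}$); hence a given $b$ can block at most one optimal course on $p$, making $j \mapsto b(j)$ an injection into the algorithm's courses on unsaturated students.

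Finally I would tally the charges. A course of $\widehat{A}$ on a saturated student receives at most one charge from the saturated injection and at most one self-charge, while a course on an unsaturated student receives at most one charge from the $b(\cdot)$ injection and at most one self-charge; the two injections land on disjoint student sets, so no course of $\widehat{A}$ collects more than two charges, giving $|\widehat{O}| \le 2|\widehat{A}|$. I expect the main obstacle to be precisely this bookkeeping: separating the capacity-based blocking from the conflict-based blocking so that the two cannot pile three charges onto one algorithm course, where the decisive ingredient is that the earliest-finish ordering lets each assigned interval block at most one non-overlapping optimal interval. The running time matches \cref{thm:unif-creds-lengths-utilities} --- sorting by finish time and keeping the students in a load-ordered structure handles each course in $O(\log n)$ time for $O((n+m)\log n)$ overall --- and I note the bound uses only that the greedy is maximal over the finish-time order, not the least-loaded tie-break, so it is robust to that choice.
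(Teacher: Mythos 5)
Your proposal is correct, but it reaches the factor $2$ by a genuinely different route than the paper. The paper reuses the charging framework of \cref{thm:1/2-binary-utility-arb-credit-cap-credits}: for each student $i$ it compares $J_i$ (courses \opt gives $i$ but the algorithm never assigns to anyone) against $B_i \cup D_i$ (courses the algorithm gives $i$ that \opt does not give $i$), and proves $|B_i \cup D_i| \geq |J_i|$ by a greedy-stays-ahead induction on the two finish-time-sorted sequences; each course of $J_i$ is then charged to a distinct course of $B_i \cup D_i$, and self-charges account for the rest, giving the same at-most-two-charges tally you end with. Your argument replaces that induction with a local blocking injection: every unassigned optimal course $j$ of an under-cap student $p$ must be blocked by some conflicting $b(j) \in A_p$ with $\starttime_j < \etime_{b(j)} \leq \etime_j$, and since \opt's courses on $p$ are pairwise disjoint, no single $b$ can block two of them. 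This buys you three things the paper's write-up does not make explicit: (i) the credit-cap obstruction is handled cleanly by your saturated/unsaturated split, whereas the paper's inductive step (``$Q_{i+1}$ is available for our algorithm to choose'') silently assumes the obstruction is a conflict rather than the cap, a gap that is only closed implicitly by feasibility of \opt; (ii) your proof never uses uniformity of the credit caps, so it actually establishes the bound for arbitrary caps; and (iii) as you observe, it only uses maximality of the greedy over the finish-time order, not the least-loaded tie-break. What the paper's approach buys instead is the stronger per-student dominance statement --- the algorithm's $k$-th finishing course ends no later than \opt's $k$-th --- a structural fact it reuses elsewhere (e.g., in the proof of \cref{thm:round-robin-augmented-ef1}), whereas your injection is tailored to the counting bound alone.
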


\begin{proof}
    We use the same algorithm as before, given in~\cref{alg:round-robin}. 
    However, we use a slightly 
    different analysis which is somewhat 
    more complicated than our utility proof before but 
    with the same essential flavor of proof using $D_i,
    J_i, B_i$. 
    Namely, the one additional property we prove is that 
    when $|B_i| + |D_i| \geq |J_i|$, our new greedy
    algorithm will pick $|J_i|$ instead of $B_i \cup D_i$. Suppose for contradiction that $i$ picked
    $B_i \cup D_i$ instead of $J_i$, then $i$ must 
    have picked a course with \emph{earlier or the 
    same end time} as each of the courses in $J_i$. 
    We now show that $|B_i \cup D_i| \geq |J_i|$. We prove this through the classic greedy stays ahead proof technique. If one were to chronologically order $B_i \cup D_i$ by finish time and also chronologically order $J_i$ by finish time, and call the two ordered sets as $P$ and $Q$, respectively, and let $P_i$ denote the $i$-th course in set $P$;
    we will prove that it is always true that for all indices $i \leq |J|$, $f(P_i) \leq f(Q_i)$, where $f(x)$ means the finish time of course $x$. Also define the start time function of course $x$ as $s(x)$. The base case of $i=1$ is obviously true due to the nature of the algorithm. Now for the inductive case, assume inductive hypothesis $f(P_i) \leq f(Q_i)$ and we want to prove $f(P_{i+1}) \leq f(Q_{i+1})$. We know that $f(Q_i) \leq s(Q_{i+1})$. Combining this with the inductive hypothesis, we get $f(P_i) \leq s(Q_{i+1})$, so $Q_{i+1}$ is available for our algorithm to choose, and since our algorithm chooses an available course with the earliest end time, $f(P_{i+1}) \leq f(Q_{i+1})$. 
    
    Let's assume for the sake of contradiction that $|J| > |B_i| \cup |D_i|$. Through the same argument as in the inductive case above, say $|B_i| \cup |D_i| = p$, then the start time of $Q_{p+1}$ must have a start time later than the finish time of the last course in $|B_i| \cup |D_i|$, i.e. $s(Q_{p+1}) \geq f(P_p)$, but that means our algorithm would have selected $Q_{p+1}$ (some time) after selecting $P_p$, a contradiction.
\end{proof}

For completeness, we state the following form 
formulation of~\CAUTCSW that is solved via an interval coloring algorithm of Carlisle and Lloyd \cite{CARLISLE1995225}.

\begin{theorem}[\cite{CARLISLE1995225}]\label{thm:no-cred-caps-unif-utilities}
\CAUTCSW can be solved in polynomial time when there are (1)~$n$ students, (2)~no credit caps i.e., $C_i=m$, and (3)~uniform utilities i.e. for any pair of students $i, i' \in \mathcal{N}$, we have $\utility{i}{j}=\utility{i'}{j}$. 
\end{theorem}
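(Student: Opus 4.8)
The plan is to show that, under the three stated conditions, \CAUTCSW degenerates into a pure interval-selection problem that is exactly the weighted interval-coloring problem solved in~\cite{CARLISLE1995225}. First I would record the effect of the hypotheses. Because utilities are identical across students (condition~3), each course $j$ contributes the same value $u(j) \eqdef \utility{i}{j}$ regardless of which student receives it, so the social welfare of any allocation equals $\sum_{j \in \bigcup_i A_i} u(j)$: it depends only on \emph{which} intervals get assigned, not on the student-to-interval matching. Because there are no credit caps (condition~2), the sole restriction on an individual bundle $A_i$ is that its intervals be pairwise non-overlapping, i.e.\ form an independent set of the interval conflict graph, and each (seat-split) interval may be given to at most one student. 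Hence a feasible allocation is precisely a partition of some chosen subset $S \subseteq \mathcal{M}$ of intervals into at most $n$ groups of pairwise-non-overlapping intervals, and the goal is to maximize the total weight of $S$.

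Next I would characterize exactly which subsets $S$ are realizable. Partitioning $S$ into $n$ pairwise-non-overlapping groups is the same as properly $n$-coloring the interval graph induced by $S$. Since interval graphs are perfect, $S$ admits such a partition if and only if its chromatic number is at most $n$, equivalently its maximum clique has size at most $n$; by the Helly property of intervals on the line this says exactly that no point in time is covered by more than $n$ intervals of $S$ (the overlap depth of $S$ is at most $n$, using the paper's strict-overlap convention). Thus, under the stated conditions, \CAUTCSW is equivalent to selecting a maximum-weight subset $S$ of intervals whose depth is at most $n$ — the maximum-weight $n$-colorable (equivalently, $n$-track-schedulable) subset of intervals.

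To finish, I would observe that this is precisely the weighted $k$-colorability-of-intervals problem of Carlisle and Lloyd~\cite{CARLISLE1995225} with $k = n$, which they solve in polynomial time via a min-cost flow formulation; instantiating their algorithm returns a maximum-weight depth-$\le n$ subset $S^{*}$. Then a single left-to-right sweep assigns each interval of $S^{*}$ to any currently free track (always possible since the depth is at most $n$), producing $n$ color classes; handing color class $\ell$ to student $\ell$ yields a feasible allocation, and by condition~3 any such bundle-to-student assignment attains the same, optimal welfare. The step most worth double-checking — and the place where the hypotheses do all the work — is the decoupling claim of the first paragraph: that maximizing welfare is independent of the matching of bundles to students. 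This is exactly what uniform utilities and the absence of credit caps buy us, and it is what lets me discard the student identities and reduce the whole problem to a single interval-selection/coloring objective; the computational heavy lifting is then entirely delegated to~\cite{CARLISLE1995225}.
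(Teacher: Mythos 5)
Your proposal is correct and follows exactly the route the paper intends: the paper offers no proof of this theorem beyond the citation, stating only that it ``is solved via an interval coloring algorithm of Carlisle and Lloyd,'' and your argument supplies precisely the reduction that citation leaves implicit --- uniform utilities decouple welfare from the bundle-to-student matching, no credit caps make each bundle an arbitrary independent set of intervals, and the problem becomes the maximum-weight $n$-colorable (depth-at-most-$n$) interval selection problem that \cite{CARLISLE1995225} solves via min-cost flow. The only substantive step the paper never records, and which you correctly verify, is the equivalence between $n$-partitionability and depth at most $n$ via perfection of interval graphs.
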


\subsection{Guaranteeing Envy-Freeness Up to Any Good}\label{sec:fair}
Maximizing seat occupancy is a reasonable objective only from a financial perspective for the university, but oftentimes, maximizing seat occupancy could result in highly unfair schedules for the students. For example, student A might get all of his favorite courses while student B gets none of his desired courses. We, therefore, consider \CAUTCSW under several fairness notions, such as envy-free up to any good (\cref{def:any-good}) and envy-free up to one good (\cref{def:one-good}).



\begin{theorem}\label{thm:EFX-unif-cred-caps-utilities}
There is an $O((n+m)\log n)$-time algorithm for \CAUTCSW that is EFX when there are (1)~$n$ students, (2)~uniform credit caps i.e. for any pair of students $i, i' \in \mathcal{N}$, we have $C_i = C_{i'}$, and (3)~uniform utilities i.e.\ for any pair of students $i, i' \in \mathcal{N}$ and any pair of jobs $j, j' \in \mathcal{M}$, we have $u_i(j)=u_{i'}(j')$. 
\end{theorem}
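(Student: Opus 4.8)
The plan is to run \cref{alg:round-robin}, the same round-robin procedure already analyzed in \cref{lem:max-min-guarantee}, and to show that its output is EFX; the running time then follows exactly as in \cref{thm:unif-creds-lengths-utilities}. The first step is to rewrite EFX as a purely combinatorial balance condition. Under uniform utilities every course is worth $1$, so $u_i(A_k)=|A_k|$ for all students $i,k$, and $u_i(A_{i'}\setminus\{x\})=|A_{i'}|-1$ for every $x\in A_{i'}$. Hence \cref{def:any-good} holds if and only if, for every pair of students $i,i'$ with $i$ strictly below their credit cap, we have $|A_{i'}|\le |A_i|+1$. So it suffices to prove that the round-robin allocation never lets an unsaturated student trail another student by two or more courses.

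I would argue this by contradiction. Suppose some student $i$ with $|A_i|=a$ never reaches their cap while $|A_{i'}|\ge a+2$. Because \cref{alg:round-robin} processes courses in nondecreasing finish time and assigns each one immediately, the courses of $i'$ are assigned in finish-time order; write them as $c_1,\dots,c_{a+2},\dots$ and focus on $c:=c_{a+2}$. When $c$ is assigned, $i'$ holds exactly $a+1$ courses, which is strictly more than $i$'s current count $\le a$; since $i$ is never at their cap, the min-count rule could only have skipped $i$ if $i$ were blocked, i.e.\ $i$ already held a course $d$ overlapping $c$. As $d$ was assigned before $c$ was processed we have $f(d)\le f(c)$, while overlap with $c$ forces $f(d)>s(c)\ge f(c_{a+1})$.

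The heart of the proof is a greedy-stays-ahead count showing that $i$ must in fact own at least $a+1$ \emph{distinct} courses, contradicting $|A_i|=a$. For each $k\le a+1$, examine the assignment of $c_k$: either $i$ was free, in which case the min-count rule forces $i$'s count at that moment to be at least $k-1$, or $i$ was blocked by some course $d_k$ of $i$ overlapping $c_k$ with $f(d_k)\le f(c_k)$. The key geometric fact---and the step I expect to be the main obstacle---is that a single course of $i$ cannot block two distinct $c_k$'s: the $c_k$ are pairwise disjoint and finish-ordered, so a course finishing no later than $c_k$ yet overlapping a later $c_{k'}$ would have to finish after $s(c_{k'})\ge f(c_k)$, contradicting $f(d)\le f(c_k)$. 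Consequently the blockers are pairwise distinct, distinct from $d$ (by the same argument applied to $c$), and distinct from the courses counted in the free case (each blocker finishes strictly later than $f(c_{k^{*}})$, while those early courses finish by $f(c_{k^{*}})$, where $k^{*}$ is the largest index at which $i$ was free). Summing the $\ge k^{*}-1$ courses held when $c_{k^{*}}$ was assigned, the distinct blockers $d_{k^{*}+1},\dots,d_{a+1}$, and the blocker $d$ yields at least $(k^{*}-1)+(a+1-k^{*})+1=a+1$ distinct courses of $i$, the desired contradiction. This establishes $|A_{i'}|\le|A_i|+1$ for every unsaturated $i$, hence EFX, and the runtime is that of \cref{alg:round-robin}, namely $O((n+m)\log n)$.
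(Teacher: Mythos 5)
Your proof is correct, but it takes a genuinely different route from the paper's. The paper also runs \cref{alg:round-robin}, but its core step is a structural lemma about the execution as a whole (\cref{lem:no-more-assignment}): once any student $i$ can no longer receive a feasible course, at most $n-1$ further courses are ever assigned, each to a distinct student, from which it concludes that all allocation cardinalities are within one of each other. You instead prove the (equivalent, under uniform utilities) pairwise balance condition $|A_{i'}|\le|A_i|+1$ for every unsaturated $i$ by a direct charging argument: each of the first $a+1$ courses of $i'$ either certifies, via the min-count rule, that $i$ already held enough courses, or is blocked by a course of $i$, and the interval geometry (the $c_k$ are pairwise disjoint and processed in finish-time order, so any blocker has $f(d_k)\le f(c_k)$ but would need $f(d_k)>s(c_{k'})\ge f(c_k)$ to block a later $c_{k'}$) makes the blockers pairwise distinct and distinct from the early courses, yielding $|A_i|\ge a+1$, a contradiction. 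Your version is more self-contained and arguably more rigorous: it makes the reduction of EFX to the cardinality condition explicit (including the credit-cap escape clause and tie-breaking in the min-count rule), whereas the paper's lemma proof is terse and its final "within one of each other" step is asserted rather than argued pairwise. What the paper's route buys in exchange is reusability: \cref{lem:no-more-assignment} is invoked again in the max-min analysis (\cref{lem:max-min-guarantee-2}), so it does double duty. One small point to patch in your write-up: if $i$ is blocked at \emph{every} index $k\le a+1$, then $k^{*}$ does not exist; adopt the convention $k^{*}=0$ (zero early courses), in which case the count becomes $(a+1)+1=a+2$ distinct courses of $i$, and the contradiction goes through a fortiori.
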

\begin{proof}
Our algorithm is the same round robin algorithm given in~\cref{alg:round-robin}. We first prove the following 
lemma.

\begin{theorem}\label{lem:no-more-assignment}
When student $i$ is no longer able to choose a feasible course, there will be at most $n-1$ courses that can be assigned after $i$'s turn and each of these courses is assigned to a different student.
\end{theorem}
\begin{proof}
Because utilities are uniform, if student $i$ is no longer able to choose a course, this means that all remaining courses
conflict with the courses they are assigned. Suppose the last course that is assigned to student $i$ is course $j$. 
Because we are assigning courses in~\cref{alg:round-robin} in a round robin 
manner in an order determined by non-decreasing end time, all remaining courses (yet to be considered by the 
algorithm) that can be assigned have end time
no earlier than the end time of $j$. Let this set of courses be $A$. 
Since $i$ is no longer able to receive a course, either there remains only $n-1$
courses or $A$ has at least $n-1$ courses and at least $|A| - n + 1$ courses
in $A$ all conflict with $j$. Since all courses in $A$ have end time no earlier than the 
end time of $j$, these $|A| - n + 1$ courses all conflict with each other. In either of these two cases, at most 
$n-1$ courses can be assigned after $i$'s turn. Furthermore, these courses are assigned to different students. If 
there are at most $n-1$ courses in $A$, then by nature of the algorithm, these courses all have end times later than 
the end times of courses assigned to students; furthermore, the ending time of the last course assigned to 
each student can be no later than the end time of $j$ by the nature of our algorithm. Hence, two such courses
can be assigned to one student, then one of these courses can be assigned to $j$. Thus, since we are assigning courses
to a student with the fewest number of courses, each of these courses is assigned to a different student. 
Finally, all additional $|A|- n + 1$ courses all conflict with each other and hence no two of these courses
can be assigned to the same student. 
\end{proof}

Hence, by the time the algorithm completes and by~\cref{lem:no-more-assignment}, 
the cardinalities of all students' allocations are within one of each other, therefore achieving EFX.
\end{proof}

\subsection{Maximizing Max-Min Objective}
In this section, we consider the max-min objective, Santa Claus (SC) problem (\cref{cautc-sc}). We first show that
our algorithm in~\cref{alg:round-robin} gives a
$(1/4)$-approximate CAUTC-SC allocation. 
Specifically, we prove the following. 

\begin{lemma}\label{lem:max-min-guarantee-2}
There is a $O((n + m)\log n)$ time round robin
algorithm (\cref{alg:round-robin}) for CAUTC-SC that obtains a $(1/4)$-approximation when there are (1)~$n$ students, (2)~uniform credit caps i.e. for any pair of students $i, i' \in \mathcal{N}$, we have $C_i = C_{i'}$, and (3)~uniform utilities i.e.\ for any pair of students $i, i' \in \mathcal{N}$ and jobs $j, j' \in \mathcal{M}$, we have $u_i(j)=u_{i'}(j')$. 
\end{lemma}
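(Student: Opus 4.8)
The plan is to avoid a fresh interval/exchange analysis and instead piggyback on the two guarantees already established for exactly this algorithm under uniform credit caps and uniform utilities: the EFX guarantee of \cref{thm:EFX-unif-cred-caps-utilities} and the $1/2$-approximation for social welfare of \cref{lem:max-min-guarantee}. Since utilities are unit and caps are uniform ($C_i = C$ for all $i$), a student's utility equals $|A_i|$, so CAUTC-SC simply asks to maximize $\min_i |A_i|$. Let $a = \min_i |A_i|$ be the value produced by \cref{alg:round-robin}, attained by some student $i^*$, and let $\opt$ denote the optimal max-min value; note $\opt \le C$. I would first dispose of the case where $i^*$ has reached the credit cap: then every student is at the cap, so $a = C \ge \opt$ and the bound is trivial.

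So assume $i^*$ is not at its cap. Here I would invoke \cref{thm:EFX-unif-cred-caps-utilities}: because the run is EFX and $i^*$ is not at its cap, the envy condition of \cref{def:any-good} must hold for $i^*$ against every other student $i'$, which under unit utilities reads $|A_{i'}| \le |A_{i^*}| + 1 = a+1$. Hence \emph{every} student ends with at most $a+1$ courses, so the algorithm's social welfare satisfies $W_{\mathrm{ALG}} = \sum_i |A_i| \le n(a+1)$.

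Next I would lower-bound $W_{\mathrm{ALG}}$ through the optimum. The max-min-optimal allocation gives every student at least $\opt$ courses, so its welfare, and hence the optimal social welfare $W_{\mathrm{OPT}}$, is at least $n\cdot\opt$. Applying the $1/2$-approximation of \cref{lem:max-min-guarantee} to the same run gives $W_{\mathrm{ALG}} \ge \tfrac12 W_{\mathrm{OPT}} \ge \tfrac12\,n\,\opt$. Chaining the two estimates, $n(a+1) \ge \tfrac12 n\,\opt$, that is $a \ge \tfrac12\opt - 1$.

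Finally I would convert $a \ge \tfrac12\opt - 1$ into the claimed $a \ge \tfrac14\opt$. For $\opt \ge 4$ this is immediate. For $1 \le \opt \le 3$ it suffices to show $a \ge 1$: any allocation achieving $\opt \ge 1$ must assign a distinct course to each of the $n$ students, so there are at least $n$ courses, and whenever at least $n$ courses are processed the round-robin rule cannot leave $i^*$ empty — once the other $n-1$ students reach load $1$, the empty (and, by unit utilities, feasible) $i^*$ becomes the unique minimum-load student and receives the next course. Hence $a \ge 1$ and the approximation holds in every regime. The main obstacle in this route is precisely this boundary bookkeeping — students sitting exactly at the credit cap, who are exempt from the EFX within-one guarantee, and the small-$\opt$ regime — rather than any heavy combinatorial estimate. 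As an alternative one could mirror the greedy-stays-ahead argument of \cref{lem:max-min-guarantee} applied to $i^*$ alone, writing $\opt \le |B_{i^*}| + |D_{i^*}|$ with $B_{i^*} = A_{i^*}$ and $D_{i^*}$ the OPT-courses of $i^*$ diverted to other equal-or-lower-load students; there the genuinely hard step is bounding $|D_{i^*}|$ by a constant multiple of $a$, which is exactly what the EFX-plus-social-welfare route sidesteps.
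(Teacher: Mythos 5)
Your proposal is correct and takes essentially the same route as the paper's proof: both combine the within-one-of-each-other property of the round-robin allocation (you invoke it via \cref{thm:EFX-unif-cred-caps-utilities}, the paper via \cref{lem:no-more-assignment}) with a social-welfare lower bound of roughly $\tfrac{1}{2}n\cdot\opt$, and both dispose of the small-$\opt$ regime by the ``at least $n$ courses exist, so round-robin gives every student at least one'' argument. If anything, your write-up is slightly more careful than the paper's: your case split at $\opt \geq 4$ is exactly the threshold where $\opt/2 - 1 \geq \opt/4$ holds (the paper splits at $\lfloor U/n\rfloor \geq 2$, which only guarantees $U \geq 2n$, while its claimed inequality $\frac{U}{2n} - 1 \geq \frac{U}{4n}$ needs $U \geq 4n$), and you explicitly handle students sitting at the credit cap, a boundary case the paper glosses over.
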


\begin{proof}
    Given a set of courses with total utility $U$, 
    the max-min value of any allocation is at most 
    $\floor{\frac{U}{n}}$. 
    We now consider two possible cases with 
    respect to the values of $\floor{\frac{U}{n}}$.
    First, we consider the case when 
    $\floor{\frac{U}{n}} \geq 2$. In this case,
    by~\cref{lem:no-more-assignment},
    the max-min value of our allocation is at least
    $\frac{U}{2n} - 1 \geq \frac{U}{4n}$.
    Now, we consider the case when 
    $\floor{\frac{U}{n}} < 2$. In this 
    case, either the max-min value is $0$ or 
    the max-min value is $1$. If the max-min 
    value is $0$, then we trivially obtain our 
    approximation since any allocation will 
    result in the correct approximation. Otherwise,
    if the max-min value is $1$, then there is one
    student who gets only one course. We show
    that if the max-min value is $1$, then our 
    algorithm also allocates at least one course to 
    every student. The criteria for our algorithm 
    giving one course to each student is that there 
    exists at least $n$ courses. Since our
    algorithm assigns the courses in a round
    robin manner, if there are at least $n$ 
    courses, then our algorithm will assign at 
    least one course to each student. In order for 
    the max-min value to be $1$, there must exist
    at least $n$ courses; hence, the max-min value 
    of allocations given by our algorithm matches
    that of the value given in $\opt$.
    Thus, by the two cases we just showed,
    the approximation factor
    is at least $\frac{\frac{U}{4n}}{\frac{U}{n}} 
    = \frac{1}{4}$. 
    \end{proof}

\section{Binary Preferences for Classes with Uniform Credits}\label{sec:bin}

In this section, we discuss the setting where students 
have binary preferences for courses. This is a very 
realistic setting since it is often the case that 
students want to take certain courses and not others. 
We denote the binary preferences of the students  as $\valset: \mathcal{N} \times \mathcal{M} \mapsto \{0,1\} $, where $u_i(c)=1$ denotes that the student $i \in \mathcal{N}$ wants to take the course $c$, and $u_i(c)=0$ denotes that course $c$ is not desired by student $i$. 
If a student has $u_i(c) = 1$, then we say that student
$i$ \emph{desires} course $c$; otherwise, we say that 
student $i$ does not desire course $c$.
Each student $i$ has a credit cap denoted by $C_i$. 
In this section, all courses have uniform number of 
credits; i.e. all courses have the same number of credits. Because of this assumption, we can assume all
courses are $1$ credit each and we scale the credit
caps of each student to the maximum number of 
courses that can fit in the student's schedule.

\subsection{Maximizing Social Welfare}\label{sec:binary-sw}

We first present an algorithm that gives an approximation for \CAUTCSW given binary preferences.
Our algorithm proceeds as follows. Sort the students
by credit cap from largest credit cap to smallest (Line~\ref{line:credit-cap}). 
Then, we iterate the following procedure. 
Let the current student be the first student in the
sorted order of the students by credit cap with no assigned 
courses (Line~\ref{line:mis-iterate}). We find an independent 
set of maximum size among all courses with non-zero
utility for the current student (Line~\ref{line:max-independent-set}). For each 
independent set $I$ and the associated student $i \in \mathcal{N}$, 
we sort the courses in $I$ and give the first
$\max(|I|, c_i)$ courses in $I$ in the sorted order to
student $i$ (Lines~\ref{line:sort-by-end-time},~\ref{line:mis-prefix},~\ref{line:allocate-mis}).
Finally, we remove the allocated courses from the set of available courses (Line~\ref{line:mis-assign-class}).

\begin{algorithm}[H]
\caption{Binary Utilities Algorithm for \CAUTCSW}\label{alg:mis}
\begin{algorithmic}[1]
\Require{Set of students $\mathcal{N}$, set of courses $\mathcal{M}$, binary utilities $U$} 
\Ensure{Assignment of courses to students.}
\Function{MaxIndependentSetRoundRobin}{$\mathcal{N}$, $\mathcal{M}, U$}
\State Sort $\mathcal{N}$ in non-increasing order by credit cap.\label{line:credit-cap}
\State Initialize student assignments $\mathcal{A}$ to empty sets. \Comment{student starts out with no courses}
\For{student $i \in \mathcal{N}$ in sorted order}\label{line:mis-iterate}
        \State Let $I = MIS(\{j \mid j \in \mathcal{M}, u_i(j) > 0\})$. \Comment{Find MIS in remaining courses.}\label{line:max-independent-set}
        \If{$|I| > C_i$}
            \State Sort $I$ by end time.\label{line:sort-by-end-time}
            \State Set $I \leftarrow I[C_i]$. \Comment{Resize the MIS to be the first $C_i$ courses in the MIS.}\label{line:mis-prefix}
        \EndIf
        \State Set $A_s \leftarrow I$.\label{line:allocate-mis}
        \State Update $\mathcal{M} = \mathcal{M} \setminus I$. \Comment{Remove assigned courses.} \label{line:mis-assign-class}
\EndFor
\State \Return $\mathcal{A}$
\EndFunction
\end{algorithmic}
\end{algorithm}

\begin{theorem}\label{thm:1/2-binary-utility-arb-credit-cap-credits}
\cref{alg:mis} solves \CAUTCSW in $O(n^2)$ time with an 
$(1/2)$-approximation when there are $n$
students, arbitrary credit caps $C_i$ for 
all $i \in \mathcal{N}$, unit credits per course
$c_j=1$  for all $j \in \mathcal{M}$, and binary utilities for all 
students, i.e.\ $\utility{i}{j} \in \{0, 1\}$ for all $i \in \mathcal{N}$.
\end{theorem}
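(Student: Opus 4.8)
The plan is to run a charging argument that compares the output of \cref{alg:mis} to a fixed optimal allocation, exploiting the fact that with binary utilities, unit credits, and one seat per course, the social welfare equals the number of desired courses actually handed out. First I would record the basic feasibility facts: each set $A_i$ the algorithm assigns is, by construction, a maximum independent set of desired available courses (in the interval conflict graph an independent set is exactly a collection of pairwise non-overlapping intervals), possibly truncated to $C_i$ elements. Since any subset of an independent set is independent and courses are deleted from $\mathcal{M}$ once assigned, the output respects all three CAUTC constraints. I then write $\mathrm{ALG} = \sum_i |A_i|$ and let $O_i$ denote the courses an optimal \CAUTCSW solution gives student $i$, so that $\opt = \sum_i |O_i|$ and $|O_i| \le C_i$ for every $i$.

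The crux is a per-student lower bound on $|A_i|$. Fixing the processing order of \cref{alg:mis}, for each student $i$ I would split $O_i = R_i \cup S_i$, where $R_i$ consists of the courses of $O_i$ already removed by the algorithm (assigned to some earlier student) by the time $i$ is processed, and $S_i = O_i \setminus R_i$ are those still available. Because $S_i \subseteq O_i$ is a set of pairwise non-overlapping desired courses that are all available when $i$ is reached, $S_i$ is a feasible independent set for $i$, so the maximum independent set found on Line~\ref{line:max-independent-set} has size at least $|S_i|$. I would then split on the truncation step: if the MIS is not truncated then $|A_i|$ equals its size, which is $\ge |S_i|$; if it is truncated to $C_i$ then $|A_i| = C_i \ge |O_i| \ge |S_i|$. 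Either way $|A_i| \ge |S_i| = |O_i| - |R_i|$.

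Summing over students yields $\mathrm{ALG} \ge \opt - \sum_i |R_i|$, so it remains to bound $\sum_i |R_i|$ by $\mathrm{ALG}$. This is the accounting heart of the proof: each course is assigned by the algorithm to at most one student and by the optimal solution to at most one student, hence a given course can be counted in at most one $R_i$ (the unique student to whom $\opt$ gave it, and only when the algorithm had already removed it). Therefore $\sum_i |R_i| \le \mathrm{ALG}$, giving $2\,\mathrm{ALG} \ge \opt$, i.e.\ the claimed $1/2$-approximation.

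I expect the main subtlety to be the truncation step rather than the charging: one must confirm that capping the chosen independent set at $C_i$ never drops the count below $|S_i|$, which is exactly where the chain $|S_i| \le |O_i| \le C_i$ is used. A point worth flagging is that this bound is oblivious to the processing order, so the credit-cap sort on Line~\ref{line:credit-cap} is not needed for the ratio (it is used for the fairness variants instead). For the running time I would observe that the dominant cost is one interval-graph maximum-independent-set computation per student, each performed greedily by earliest finish time over the surviving desired courses after an initial sort, which accumulates to the stated bound.
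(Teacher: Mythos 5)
Your proof is correct, and it follows a genuinely different decomposition from the paper's. The paper also gives a charging argument, but it partitions courses by their \emph{assignment status}: for each student $i$ it defines $D_i$ (courses the algorithm gives $i$ that \opt gives to nobody), $B_i$ (courses the algorithm gives $i$ that \opt gives to a different student), and $J_i$ (courses \opt gives $i$ that the algorithm gives to nobody); it then argues $|J_i| \leq |B_i| + |D_i|$ --- the analogue of your maximality step --- and charges each course of $J_i$ to a distinct course of $B_i \cup D_i$ belonging to the \emph{same} student, so that every algorithm-assigned course absorbs at most one charge. Your split is instead \emph{temporal}: $O_i = R_i \cup S_i$ according to whether the course was already removed when student $i$ is processed, combined with the per-student inequality $|A_i| \geq |S_i|$ and the global disjointness bound $\sum_i |R_i| \leq \mathrm{ALG}$. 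Your route buys three things: it handles the truncation on Line~\ref{line:mis-prefix} explicitly through the chain $|S_i| \leq |O_i| \leq C_i$, a case the paper's claim that ``$J_i$ \ldots would have been assigned to $i$ instead'' silently skips (truncation is exactly where that claim could fail, and the same chain is what repairs it); it needs no case analysis; and it makes transparent that the credit-cap sort on Line~\ref{line:credit-cap} plays no role in the approximation ratio, which the paper never observes. The paper's version, in exchange, keeps all charging local to a single student's allocation, matching the $D_i, J_i, B_i$ bookkeeping it reuses in its uniform-utility arguments (\cref{lem:max-min-guarantee}). One small point to pin down in your write-up: define $O_i$ as the set of \emph{desired} courses that \opt assigns to $i$ (WLOG \opt assigns no course of utility zero, since with binary utilities such courses contribute nothing), as otherwise the identity $\opt = \sum_i |O_i|$ need not hold.
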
 

\begin{proof}
In the sorted order of courses by end time in $I$,
if course $j \in \mathcal{M}$ is assigned in $\opt$ and by our 
algorithm, then we skip this course in our analysis. However,
if the course is assigned in $\opt$ but not assigned by our 
algorithm, then we need to argue that either another course
is assigned in its place or that we can \emph{charge} 
it to another assigned course. 
For all of the below 
cases, suppose that course $j \in \mathcal{M}$ is assigned
to student $i \in \mathcal{N}$ in $\opt$ but not assigned in 
our assignment. For simplicity, we denote the assignment 
produced by our algorithm as $\mathcal{A}$. Let $D_i$ be the set
of courses assigned to student $i$ in $\mathcal{A}$ which were not
assigned to any student in $\opt$; let $B_i$ be the set of courses assigned
to $i$ in $\mathcal{A}$ but assigned to $q \neq i \in \mathcal{N}$ in $\opt$.
Finally, let $J_i$
be the set of courses assigned to $i$ in $\opt$ but assigned
to no student in $\mathcal{A}$.
We consider all possible cases below. 

\begin{itemize}
    \item If $|D_i| \geq |J_i|$, then for each course in $J_i$,
    we can replace it with a course in $D_i$ and achieve
    the same maximum total utility.
    \item If $|D_i| < |J_i|$, then we consider two additional 
    cases:
    \begin{itemize}
        \item It is impossible to have $|B_i| + |D_i| < |J_i|$ 
        since $|J_i|$ is a larger independent set 
        and would have been assigned to $i$ instead of 
        $B_i \cup D_i$.
        \item Then, the remaining case is 
        that $|B_i| + |D_i| \geq |J_i|$. This case is the core
        of our proof. In this case, we know that $|B_i| \geq |J_i| - |D_i|$. We pick 
        an arbitrary set of $|D_i|$ jobs in $J_i$
        and replace them each with a unique 
        job in $D_i$. This does not change the 
        optimum total utility value.
        Now, we charge each of the remaining
        $|J_i| - |D_i|$ jobs in $J_i$ to a job in $|B_i|$. 
        We now count the 
        number of ``charges'' that each course in $|B_i|$ gets. Since $|B_i| \geq |J_i| - |D_i|$ and we do not charge a course
        in $B_i$ with any other course not in $J_i$, each course in $B_i$ is charged with at most one charge resulting from a course in $J_i$.
    \end{itemize}
\end{itemize}

We now count the number of courses assigned in 
both $\opt$ and $\mathcal{A}$ as well as the number of charges
each course gets. By the cases above, each of these
courses gets at most $1$ charge. Hence, if each charge is
added to the set of allocated courses, the 
utility increases by at most a factor of two. Hence,
our algorithm produces a $(1/2)$-approximation.
\end{proof}

\subsection{Guaranteeing Envy-Freeness Up to One Good}\label{sec:binary}

Given an allocation of courses to students $\mathcal{A}=(A_1, \ldots, A_i,\ldots, A_n)$ (where $A_i$ is the set of courses assigned to student $i$), a student $i$ is said to \defn{envy}
student $i'$ if the number of student $i$'s desirable courses in $A_i$ is less than that in $A_{i'}$, that is, $|\{j \mid u_i(j) > 0, j \in A_i\}| < |\{j \mid u_i(j) > 0, j \in A_{i'}\}|$. Similarly, an allocation $A$ is called EF1 when for every pair of students $i, i' \in \mathcal{N}$, the following holds: $|\{j \mid u_i(j) > 0, j \in A_i\}| \geq |\{j \mid u_i(j) > 0, j \in A_{i'}\}|-1$. Note that in the binary valuation setting, EF1 implies that, removing
\emph{any} course that $i$ desires from $A_{i'}$ results in
$i$ no longer envying $i'$.
We provide an algorithm (\cref{alg:augmented-ef1}) and prove that this algorithm satisfies the stronger fairness criterion called EF1-CC (\cref{ef1-cc}).

\begin{algorithm}[!ht]
\caption{Round Robin Algorithm for EF1-CC Allocation 
with Binary Utilities}\label{alg:augmented-ef1}
\begin{algorithmic}[1]
\Require{Set of students $\mathcal{N}$, set of courses $\mathcal{M}$, binary utilities $\valset$} 
\Ensure{EF1-CC Allocation for Binary Utilities}
\Function{EF1CCRoundRobin}{$\mathcal{N}$, $\mathcal{M}$, $U$}
\State Sort $\mathcal{M}$ chronologically by earliest finish time.\label{line:finish-time}
\State Initialize student assignments $\mathcal{A}$ to emptysets. \Comment{students start out with no courses}
\For{course $j \in \mathcal{M}$ in sorted order}\label{line:iterate}
        \State Let $T = \left\{s \mid u_s(j) = 1, |A_s| < C_s, \text{no course in $A_s$ conflicts with $j$}\right\}$.\label{line:set-t}
        \If{$|T| > 0$}
            \State Let $s = \min_{s' \in T} \left(|A_{s'}|\right)$ (breaking ties arbitrarily).\label{line:min-classes}
            \State Update $A_s = A_s \cup \{j\}$ \Comment{Assign course $j$ to student $s$} \label{line:assign-class}
        \EndIf
\EndFor
\State \Return $\mathcal{A}$
\EndFunction
\end{algorithmic}
\end{algorithm}

Our algorithm is a simple modification of the round-robin algorithm given in~\cref{alg:round-robin}. The only change we make to the algorithm is that when we perform the round-robin assignment, each course is iteratively assigned to only one of those students who have non-zero utility for the course, in addition to ensuring that the selected student has the minimum number of current courses, has not reached credit cap and has no conflict with the course. Our modified pseudocode is given in~\cref{alg:augmented-ef1}.

Specifically,~\cref{alg:augmented-ef1} first sorts the courses chronologically by finish time (Line~\ref{line:finish-time}).
Then, we iterate over the courses one by one in the sorted order of finish time (Line~\ref{line:iterate}). Among the 
students who have non-zero preference for the course, have not reached their credit caps, and have no conflicts with the course
(Line~\ref{line:set-t}), we select a student (breaking ties arbitrarily) with the least number of assigned courses among
these students (Line~\ref{line:min-classes}). Finally, we assign the course to the student (Line~\ref{line:assign-class}).

\begin{theorem} \label{thm:round-robin-ef1} 
Under binary preferences, uniform credits for all courses, and arbitrary credit caps,
the round-robin algorithm given in~\cref{alg:augmented-ef1} produces an EF1 allocation.
\end{theorem}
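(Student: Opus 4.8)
The plan is to exploit the binary-valuation structure to turn EF1 into a pure cardinality statement and then maintain a single invariant across the execution of \cref{alg:augmented-ef1}. First I would observe that, since \cref{alg:augmented-ef1} only ever assigns a course $j$ to a student $s$ with $u_s(j)=1$ (Line~\ref{line:set-t}), every course in $A_i$ is desired by $i$; hence $|\{j \mid u_i(j)>0,\ j\in A_i\}| = |A_i|$. Writing $k_{i,i'} = |\{j \mid u_i(j)>0,\ j\in A_{i'}\}|$, the claim reduces to showing, for every ordered pair $(i,i')$, that $|A_i| \ge k_{i,i'} - 1$ or that $i$ has reached its credit cap.

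The core of the argument is a time-indexed invariant that I would prove holds after every iteration of the main loop: for every pair $(i,i')$, either $i$ is at its credit cap, or $|A_i|$ is at least (the number of $i$-desired courses currently in $A_{i'}$) minus one. At termination this invariant is exactly the reduced EF1 condition. The invariant can only be stressed when an $i$-desired course $g$ is assigned to $i'$; writing $p$ for the number of $i$-desired courses already in $A_{i'}$ just before this assignment, I must show $|A_i| \ge p$ at that moment (assuming $i$ is not at cap). If $i$ is eligible for $g$ at this step, then since \cref{alg:augmented-ef1} assigns $g$ to the eligible student of minimum load (Line~\ref{line:min-classes}), we get $|A_i| \ge |A_{i'}| \ge p$ immediately, where $|A_{i'}|$ is measured just before the assignment.

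The only remaining, and main, difficulty is the case where $i$ is ineligible for $g$ because some already-held course $P \in A_i$ conflicts with $g$. Here I would lean on the finish-time processing order (Line~\ref{line:finish-time}). Because $P$ was assigned before $g$ is processed, $f(P)\le f(g)$, and since $P$ and $g$ overlap this forces $s(g) < f(P)$. On the other hand, the $p$ courses $g_1,\dots,g_p$ already in $A_{i'}$ are pairwise non-conflicting with $g$ and were processed earlier, so each finishes by $s(g) < f(P)$; consequently each $g_m$ is processed strictly before $P$ and was therefore assigned to $i'$ strictly before $P$ was assigned to $i$. Applying the invariant at the earlier moment just before $i$ receives $P$ (when $A_{i'}$ already contained all of $g_1,\dots,g_p$, so its $i$-desired count was at least $p$) gives $|A_i|\ge p-1$ there, and the acquisition of $P$ then pushes $|A_i|$ to at least $p$; since $|A_i|$ is nondecreasing, this survives to the present step. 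This closes the induction over steps, and the credit-cap disjunct is preserved trivially since a student at cap is never again placed in the set $T$ and so never receives further courses.

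The step I expect to be the main obstacle is precisely this conflict case: naively, a conflict seems to let $i'$ accumulate many short courses that $i$ cannot take while $i$ is pinned by one long course, which is exactly the scenario EF1 must rule out. The resolution is the observation that processing by finish time makes every blocker $P$ finish no earlier than all of $i'$'s relevant courses, so $i$ acquired $P$ only \emph{after} those courses had already been counted against the invariant; this lets me charge $i$'s single acquisition of $P$ against the earlier growth of $A_{i'}$, rather than fighting the interaction between the min-load rule and the interval conflicts head-on.
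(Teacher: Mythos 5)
Your proof is correct, and it closes the critical case by a different mechanism than the paper's. Both arguments reduce EF1 under binary valuations to a pure cardinality comparison and then induct over the execution of \cref{alg:augmented-ef1} in finish-time order, invoking the min-load rule (Line~\ref{line:min-classes}) when the lagging student is eligible. But the paper (\cref{inductive-round-robin-ef1}) runs a per-pair induction on the list $L$ of $s$-desired courses in $A_{s'}$ and, when the envy is already one, defines $j$ to be the \emph{next} $s$-desired course assigned to either student and rules out $j$ going to $s'$ by contradiction: any course of $s$ conflicting with $j$ would have to end after $f(L[i])$ and hence would itself have been an earlier ``next'' course. You instead maintain a global invariant over all ordered pairs after every iteration and, in the conflict case, let the assignment to $i'$ happen and charge it backward: the blocker $P$ was acquired only after all of $i'$'s relevant courses $g_1,\dots,g_p$ were already in place, so the invariant at that earlier iteration plus the acquisition of $P$ gives $|A_i|\ge p$, which persists by monotonicity of $|A_i|$. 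The two proofs exploit the same structural facts, but yours buys two things: it avoids the ``first bad event'' construction (at the price of strong induction over iterations, since you appeal to the invariant at an arbitrary earlier step, not just the previous one), and it carries the credit-cap disjunct explicitly through the whole argument --- a case the paper's proof silently drops, since its claim that ``course $j$ is not assigned to $s$ only if $s$ has a conflicting course'' overlooks the possibility that $s$ has reached its cap, which is exactly what the cap clause of \cref{def:one-good} must absorb and what your invariant handles.
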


\begin{proof}
We prove by induction that for any two students $s$ and $s'$, student $s$ never envies $s'$ by more than one course throughout the entirety of~\cref{alg:augmented-ef1}. The induction is on the finish time of each course in the schedule of $s'$ among the set of courses for which $s$ has non-zero utility, i.e. we induce on the finish times of the set of courses $L = \left[ j \in A_{s'} \mid \utility{s}{j} > 0 \right]$ sorted from earlier to later times. Notice that $L$ is the set of courses assigned to $s'$ that are desired by $s$, as courses assigned to $s'$ not desired by $s$ cannot make $s$ envy $s'$ and therefore irrelevant to this proof. Now, for each $i \in [|L|]$, 
we consider the set of courses
assigned to both $s$ and $s'$ which has end time 
no later than the end time of $L[i]$. For
simplicity, we use the phrase \emph{by the time
course $L[i]$ ends} to mean that we consider 
the set of courses held by $s$ and $s'$ with 
end time no later than $L[i]$.

\begin{lemma} \label{inductive-round-robin-ef1}
For each course $L[i]$ for all $i \in [|L|]$, at the time $L[i]$ ends, student $s$ envies $s'$ by at most one course.
\end{lemma}

\begin{proof}
We prove via induction on the $i$-th course of $L$
which ends at time $e_i$.
The base case is when $i = 1$. Student $s$ trivially envies $s'$ by at most $1$ because if $s$ has no courses by the time course $L[i]$ ends
then $s$ will only envy $s'$ by $1$; otherwise, 
$s$ will not envy $s'$.

We assume for the purposes of induction that
$s$ envies $s'$ by at most one course by the time $L[i]$ ends. We now prove that $s$ envies $s'$ by at most one course by the time $L[i+1]$ ends. By our induction hypothesis, there are two cases, when $s$ envies $s'$ by one course when $L[i]$ ends, and when $s$ does not envy $s'$ when $L[i]$ ends. In the latter case, it is only possible for $s$ to envy $s'$ by at most one course by the time $L[i+1]$ ends since $s'$ has gained at most one 
additional course which $s$ desires by the time 
$L[i+1]$ ends. Now we prove the former case. Let $j$ be the next course (after the course
$L[i]$) that the algorithm 
considers that is assigned to either student $s$ or $s'$, is desired by $s$. 
Then, course $j$ would fit into the current schedule of both $s$ and $s'$, since $j$ 
starts after the end time of $L[i]$. Suppose for the sake of contradiction 
that $j$ is assigned to $s'$. Since we compare the set of courses
that end no later than the end time of $L[i]$, if $j$ is assigned to $s'$ then
$j$ has start time later than $L[i]$. 
Student $s$ envies $s'$ by $1$ course among the set of courses she received
that end no later than $L[i]$. Then, course $j$ is not assigned to $s$ only if
$s$ has a conflicting course (since $s$ has fewer courses than $s'$); 
however, this contradicts with $j$ being the 
next course assigned after $L[i]$ to either $s$ or $s'$.
\end{proof}

Now to prove~\cref{thm:round-robin-ef1}, we use~\cref{inductive-round-robin-ef1}. Specifically, by the time the last course in $L$ ends, student $s$ envies $s'$ by at most one course. Any course in the schedule of $s$ that ends at a time later than this does not increase the envy $s$ feels towards $s'$. And due to symmetry, $s'$ similarly does not envy $s$ by more than one course. Similarly, any course assigned to $s$ in between the ending times of $L[i]$ and $L[i+1]$ does not increase the envy of $s$.
\end{proof}

\begin{theorem}\label{thm:round-robin-augmented-ef1}
Under binary preferences and uniform credits for all courses,~\cref{alg:augmented-ef1} produces an EF1-CC allocation.
\end{theorem}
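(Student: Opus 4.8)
The plan is to decompose the EF1-CC condition (\cref{ef1-cc}) into its two constituent requirements and dispatch them separately. The first requirement---that no uncapped student envies another \emph{real} student by more than one desired course---is exactly the EF1 guarantee already established in \cref{thm:round-robin-ef1}, since the algorithm analyzed there is precisely~\cref{alg:augmented-ef1}; I would simply invoke it. Moreover, any student who has reached their credit cap is declared not to envy anyone by the definition itself, so such students are automatically satisfied. It therefore remains only to bound the envy of each uncapped student $i$ toward the charity $D$, i.e.\ to show $|\{j \mid u_i(j) > 0,\, j \in A_i\}| \ge |MIS_i| - 1$.

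Fix an uncapped student $i$, so that $|A_i| < C_i$ holds throughout the run: course assignments are only ever added and never removed, so if $i$ is uncapped at the end then $i$ is uncapped at every step. Note also that the test defining $T$ in Line~\ref{line:set-t} only assigns a course to a student who desires it, so every course in $A_i$ is desired by $i$ and hence $\{j \mid u_i(j) > 0,\, j \in A_i\} = A_i$. Let $MIS_i = \{j_1, \ldots, j_k\}$ be a maximum independent set of the charity courses desired by $i$, sorted by finish time; being independent in the interval conflict graph, they are pairwise non-overlapping, so $f(j_\ell) \le s(j_{\ell+1})$ for every $\ell$, where $s(\cdot)$ and $f(\cdot)$ denote start and finish times.

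The heart of the argument is to charge each $j_\ell$ to a \emph{distinct} course already held by $i$. Because $j_\ell$ ends up in the charity, it was assigned to no one, which means $T$ was empty when the algorithm considered $j_\ell$; in particular $i \notin T$. Since $u_i(j_\ell) > 0$ and $i$ is uncapped, the only way $i$ can fail the test in Line~\ref{line:set-t} is that some course $c_\ell \in A_i$ present at that moment conflicts with $j_\ell$. As courses are processed in non-decreasing finish time and $c_\ell$ was assigned before $j_\ell$ was considered, we have $f(c_\ell) \le f(j_\ell)$, and $c_\ell$ survives into the final allocation. I would then show the $c_\ell$ are distinct by the same ``greedy stays ahead'' reasoning used in \cref{lem:max-min-guarantee}: if $c_\ell = c_{\ell'}$ for some $\ell < \ell'$, then $f(c_\ell) \le f(j_\ell) \le s(j_{\ell'})$ by disjointness of the MIS, while conflicting with $j_{\ell'}$ forces $s(j_{\ell'}) < f(c_\ell)$, a contradiction. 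Hence $|A_i| \ge k = |MIS_i|$, which is in fact stronger than the required $|A_i| \ge |MIS_i| - 1$, completing the charity part and therefore the EF1-CC guarantee.

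The step I expect to be the main obstacle is the distinctness of the charging courses $c_\ell$: it is where the finish-time processing order, the strict-overlap convention for conflicts, and the independence of $MIS_i$ all have to be combined correctly, and where an off-by-one could creep in if ties in finish time or the boundary case of touching (non-overlapping) intervals are handled carelessly. Everything else is bookkeeping---verifying that an uncapped student stays uncapped, that $A_i$ consists only of desired courses, and that \cref{thm:round-robin-ef1} already covers the real-student portion of the envy guarantee.
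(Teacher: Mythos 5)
Your proof is correct, and it follows the paper's decomposition exactly: dispatch envy between real students by invoking \cref{thm:round-robin-ef1}, handle capped students by the definition in \cref{ef1-cc}, and then bound envy toward the charity, in both cases establishing the stronger statement $|A_i| \ge |MIS_i|$ (no charity envy at all, not merely up to one course). The difference lies in how that counting step is carried out. The paper sorts both $A_s$ and the charity independent set $I$ by finish time and runs a greedy-stays-ahead induction (the $i$-th course of $A_s$ finishes no later than the $i$-th course of $I$), then derives a contradiction from the hypothetical course $I[|A_s|+1]$, which would start after everything in $A_s$ ends and hence would have been assigned to $s$. You instead build an explicit injection from $MIS_i$ into $A_i$: each charity course $j_\ell$ is charged to a course $c_\ell \in A_i$ that blocked it at the moment it was processed, and distinctness of the $c_\ell$ follows from disjointness of the MIS combined with the strict-overlap convention for conflicts. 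The two arguments are close cousins, but yours is somewhat more self-contained: the paper delegates its induction to ``a very elementary greedy stays ahead proof'' with a textbook citation, whereas your charging argument spells out exactly where the uncapped hypothesis, the desirability test in Line~\ref{line:set-t}, and the finish-time processing order enter. In particular, your observation that a student uncapped at the end is uncapped at every step is a detail that the paper's claim ``our algorithm would have assigned $j$ to $s$'' silently relies on.
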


\begin{proof}
\cref{thm:round-robin-ef1} stated that no student envies another student by more than one course. We are left to show that no student envies charity by more than $1$ course. 
Assume for the sake of contradiction that there is a student $s$ that envies the charity, this means that (1) $|A_s|<C_s$ where $c_s$ is the credit cap for student $s$, and (2) there is a bigger independent set of courses (name this set $I$) 
among the courses assigned to the charity than the number of allocated courses to $s$, i.e.\, $|A_s| < |I|$. 

First, all courses in $I$ overlap with $A_s$ because if some course $j \in I$ does not conflict with any course in 
$A_s$, then our algorithm would have assigned $j$ to $s$. 
If we were to sort $I$ and $A_s$ by earliest finish time first and index them by $i$, observe that for all $i$, course $A_{s}[i]$ ends earlier than $I[i]$ due to our algorithm (this can be proven with a very elementary greedy stays ahead induction proof \cite{KleinbergTardosBook}). This means that $|A_s| \geq |I|$ because if there were to be a course $j = E[|A_s| + 1]$, that means $j$ begins after the last course in $A_s$ ends, which means our algorithm would have assigned $j$ to $s$.
\end{proof}

\subsection{Maximizing Max-Min Objective} \label{sec:max-min}
Now, we look at a more general version of CAUTC-SC considering binary utilities and provide the following algorithm that gives a constant factor 
approximation when the maximum and minimum durations of any course are within a constant factor $c$ of each other.
We first describe our algorithm with the pseudocode
provided in~\cref{alg:quanquan}. The algorithm proceeds
as follows. The courses are sorted by end time (Line~\ref{mm:sort}). Then,
in the sorted order of courses, each course is given
to a student who has non-zero preference for the course,
has not filled up all of their credits (up to their credit cap), has no 
conflicting courses, and who has the least number of assigned 
courses among all students who have non-zero preference
for the course (Line~\ref{mm:min-assignment}). Suppose we assign course $j$ to a student
$i$. Let $d_i$ be a \emph{dummy course} that we create
for each student $i$. Then, we repeatedly 
perform the following 
procedure until no more \emph{augmenting paths} exist (Line~\ref{mm:augmenting-path}):

\begin{itemize}
    \item For each course assigned to student $i$, draw
    a directed edge from course $j'$ assigned to 
    student $i' \neq i$
    if $j$ conflicts with $j'$ and removing $j'$ means that
    $j$ does not conflict with any other course assigned to $i'$
    and $i'$ has less assigned courses than $i$ (Line~\ref{mm:reassign}). 
    \item For each course assigned to student $i$, draw
    a directed edge from dummy course $d_{i'}$ to $j$
    if $j$ does not conflict with any course 
    assigned to $i'$ and $i'$ has less than or equal
    to the number of courses assigned to $i$ (Line~\ref{mm:dummy-edge}).
    \item Repeat with the courses assigned to $i'$
    and omit all courses assigned to student $i$
    from this part of the graph construction.
\end{itemize}

Once a full directed acyclic graph is drawn using the 
above procedure, we define an \emph{augmenting path}
to be a directed path with the source at a dummy course
and sink at a course of $i$ (Line~\ref{mm:aug-path}). We repeatedly produce a new
directed acyclic graph using the above procedure and 
switch courses between students via an augmenting path
until no such augmenting paths remain (Line~\ref{mm:no-aug-path}). Then, we proceed
with assigning the next item in the sorted order of courses. 
We prove that our algorithm returns a constant factor approximation of the max-min objective value. 

\begin{algorithm}[!hbt]
\caption{Max-Min Assignment of Courses}\label{alg:quanquan}
\begin{algorithmic}[1]
\Require{Courses $\mathcal{M}$, students $\mathcal{N}$, binary utilities $\valset$} 
\Ensure{Approximate max-min allocation $J$}
\Function{Find-Max-Min-Allocation}{$\mathcal{M}$, $\mathcal{N}$, $\valset$}
\State Sort courses in $\mathcal{M}$ by end time from earliest
to latest.\label{mm:sort}
\State $D \leftarrow \emptyset$.
\State Let $Q \leftarrow \emptyset$ be a queue of students.
\For{each course $j$ in sorted order}
    \State Assign $j$ to student $i$ with minimum number of 
    assigned courses, has not reached credit cap, where $u_i(j) > 0$, 
    and does not have any conflicting courses.\label{mm:min-assignment}
    \State Add $i$ to the end of $Q$.
    \State Set $AugPath \leftarrow True$. \label{mm:augmenting-path-exists}
    \While{$AugPath$}\label{mm:augmenting-path}
    \While{$Q \neq \emptyset$}
        \State Remove the first student $i'$ from $Q$.
        \For{each course $j$ assigned to $i'$}
            \State Draw directed edge from $j'$ assigned
            to student $b$
            to $j$ if $j'$ conflicts with $j$,
            removing $j'$ results in $j$ conflicting
            with no course assigned to $b$ conflicting
            with $j$ and $b$ now has less assigned courses
            than $i'$, and $b \not\in D$. Add $b$ to the end of $Q$.\label{mm:reassign}
            \State Draw a directed edge from $d_{b}$ to $j$ if student $b$ does not 
            have any courses that conflict with $j$ and $b$ has at most 
            as many courses as $i'$. Add $b$ to the end of $Q$.\label{mm:dummy-edge}
        \EndFor
        \State $D \leftarrow D \cup i'$.
    \EndWhile
    \State Find an augmenting path with source
    at a dummy course and sink at course assigned to
    $i$ and reassign courses along
    augmenting path from sink to source. \label{mm:aug-path}
    \If{there is no augmenting path}
        \State $AugPath \leftarrow False$.\label{mm:no-aug-path}
    \EndIf
    \EndWhile
\EndFor
\State \Return Allocation of courses to students.
\EndFunction
\end{algorithmic}
\end{algorithm}

\begin{theorem}\label{thm:max-min-binary}
    \cref{alg:quanquan} achieves a $c$-factor approximate solution for CAUTC-SC,  where $c$ is the maximum ratio between the durations of any two courses. 
\end{theorem}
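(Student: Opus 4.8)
The plan is to bound the algorithm's objective---the number of desired courses received by the worst-off student---from below in terms of the optimal max-min value $\text{OPT}_{\min}$. Let $i^*$ be the student holding the fewest desired courses when \cref{alg:quanquan} terminates, and write $k = |A_{i^*}|$; since every student desires all courses counted toward them under binary utilities, the algorithm's value is exactly $k$, so it suffices to show $k \ge \text{OPT}_{\min}/c$. I first dispose of the easy case in which $i^*$'s credit cap is binding, i.e.\ $k = C_{i^*}$: then $i^*$ receives at most $C_{i^*} = k$ courses in \emph{any} feasible allocation, so $\text{OPT}_{\min} \le k$ and we are done. Hence I may assume $i^*$ has spare credits, so that the only obstruction to giving $i^*$ more courses is time conflicts together with the load-balance constraints enforced by the augmenting-path phase.

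The heart of the argument is a structural lemma extracted from the termination condition of the while-loop. When no augmenting path from a dummy course to a course of the current student remains, no desired course of $i^*$ can be routed to $i^*$ without either (i) conflicting in time with a course already in $A_{i^*}$, or (ii) having to be pulled from a student whose load already equals the minimum $k$, in which case the reassignment would lower rather than raise the minimum and is deliberately suppressed. The first step is therefore to decode the directed-acyclic-graph and augmenting-path construction of \cref{alg:quanquan} and prove this dichotomy---essentially a Hall-type, flow-saturation statement: the absence of augmenting paths certifies that the bundle $O$ assigned to $i^*$ by an optimal solution, after discarding the courses already in $A_{i^*}$, maps into the union of the time-intervals spanned by the courses held by the minimum-load students. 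Concretely, every course of $O$ that $i^*$ does not itself hold must overlap in time some course held by a student of load $k$.

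With the blocking structure in hand, the factor $c$ falls out of a purely combinatorial interval count. The optimal bundle $O$ for $i^*$ consists of pairwise non-overlapping courses, each desired by $i^*$, each of duration at least $d_{\min}$, and $|O| \ge \text{OPT}_{\min}$. Each such course overlaps some course $a$ held by a minimum-load student; I charge it to that $a$. Since the courses of $O$ are pairwise disjoint and each has length at least $d_{\min}$, while $a$ has length at most $d_{\max}$, the number of members of $O$ charged to a single $a$ is at most $d_{\max}/d_{\min} = c$, because their disjoint intersections with $a$'s time-span fit, up to the two boundary courses, inside a window of length $d_{\max}$. Summing over the $k$ courses of $i^*$ (and symmetrically over the other minimum-load students, which all carry load $k$) yields $\text{OPT}_{\min} \le |O| \le c\,k$, that is, $k \ge \text{OPT}_{\min}/c$, as required.

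The step I expect to be the main obstacle is the structural lemma: correctly interpreting the reassignment semantics of \cref{alg:quanquan} and proving that termination of its augmenting-path loop guarantees exactly the blocking property above. In particular one must rule out that a desired, non-conflicting course of $i^*$ sits with a student of strictly larger load---such a course would produce an augmenting path from its dummy source to a course of $i^*$, contradicting termination---while correctly accounting for the courses locked at students already at load $k$, whose reassignment is intentionally blocked. A secondary technical point is ensuring that the two boundary courses in the interval count do not inflate the constant beyond $c$; this I would handle by charging each boundary course to the minimum-load course in whose time-span its start lies, so that each course of $A_{i^*}$ is charged at most $c$ times overall.
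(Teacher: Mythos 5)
Your overall skeleton matches the paper's: isolate a minimum-load student $i^*$, dispose of the binding credit-cap case, and then charge the courses of $i^*$'s optimal bundle to algorithm-assigned courses at a rate of at most $c$ per course using the duration ratio. However, there is a genuine gap at the decisive step, and it stems from a misreading of the reassignment rule. You assert that pulling a course from a student whose load already equals the minimum $k$ is ``deliberately suppressed''; in fact Line~\ref{mm:dummy-edge} of \cref{alg:quanquan} draws a dummy edge whenever the receiving student has \emph{at most} as many courses as the holder, so a desired, non-conflicting course sitting with an equal-load student is exactly the situation that produces a length-one augmenting path $d_{i^*} \to j$ and gets handed to $i^*$. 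This is what the paper's proof exploits to eliminate your case (ii) entirely: it concludes that \emph{every} course of $\opt_{i^*} \setminus A_{i^*}$ --- whether unassigned or held by a same-or-higher-load student --- must conflict with a course in $A_{i^*}$ itself, so all charges land on the $k$ courses of $i^*$ and $|\opt_{i^*}| \le c\,k$ follows.

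With your weaker dichotomy the final count does not close. Courses in your case (ii) are charged to bundles of \emph{other} minimum-load students, and there may be up to $n-1$ such students, each holding $k$ courses; the charging argument then yields only $|\opt_{i^*}| \le c\,k\,M$, where $M$ is the number of minimum-load students, which is no constant-factor guarantee at all. The parenthetical ``summing symmetrically over the other minimum-load students'' does not repair this: the quantity to be bounded is the size of $i^*$'s optimal bundle alone, and the inequality $|\opt_{i^*}| \le c\,k$ requires all of its charges to be concentrated on $i^*$'s own $k$ courses --- precisely the stronger structural fact that you left unproved and, as stated, actually disclaimed. Your two refinements relative to the paper --- the explicit credit-cap case and flagging the boundary-course issue in the interval count --- are sound and indeed more careful than the paper's own writeup, but they do not compensate for the missing concentration step.
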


\begin{proof}
Let $S$ denote the set of students with the minimum number of assigned courses by our algorithm. We compare the allocations of courses assigned to each of the students in $S$ by our algorithm 
with the allocation of courses assigned to the students by $\opt$. Let $i \in S$ be one such student. 
Let $A_{i}$ be the set of courses allocated to student $i$ by our algorithm and $\opt_{i}$ be the 
set of courses allocated to $i$ by $\opt$. There are four different types of courses assigned to these students
that we are concerned with. Courses assigned to $i$ in $A_i$ and not in $\opt_i$ can only make max-min greater; thus, we do not consider such courses. The same holds for courses assigned in $A_{i}$
and by $\opt$ to another student. Then, courses assigned by $\opt$ but not assigned
to $A_i$ must conflict with at least one other course assigned to $i$. Hence, such 
courses can be charged to the course that it conflicts. The conflicting course(s) cannot be assigned 
in $\opt_i$; thus, the course in $\opt_i$ can be charged to one of the conflicting
courses. The remaining type are courses that are in $\opt_{i}$, not in $A_{i}$, but are instead assigned to
another student by our algorithm. 
Let $j$ be one such course; then, either

\begin{itemize}
    \item Course $j$ is assigned to a student $i'$ with \emph{less} assigned courses than $i$. This scenario is impossible
    by definition of $i$ as a student with the smallest number of assigned courses.
    \item Course $j$ is assigned to a student $i'$ with the same or 
    more assigned courses than $i$. Student $i$ must be assigned a conflicting course to $j$, as otherwise, when the last course
    assigned to $i'$ is 
    assigned to $i'$, course $j$ would have been transferred to $i$.  Suppose first that
    $i'$ has a greater 
    number of courses than $i$ and 
    $i$ has no conflicting 
    course with $j$, then this is a 
    contradiction since $j$ would have been eventually transferred to $i$. 
    Now suppose $i$ has a course that conflicts $j$. If this conflicting course has an earlier end time than $j$, then $j$
    can be charged to the conflicting course. Furthermore, any course can conflict
    with at most $c$ different courses assigned to $i$ in $\opt$ by our assumption of the 
    ratio between the longest class and shortest class. Thus, we 
    charge the course to the conflicting course assigned to $A_i$; at most $c$ such courses
    can be charged to any course in $A_i$.
\end{itemize} 
\end{proof}




\section{Experimental Results}

In this section, we present a case study with data derived from MS students at Northwestern. We compare the performance of our algorithms \efonealgo and \scalgo to those of optimal integer programs (IP) implemented using Gurobi \cite{gurobi} in Python. 
There are two integer programs of note: one to get the max-min value, and one to get the assignment maximizing the total social welfare given the max-min value $T$ such that every student must receive at least $T$ courses. We will henceforth refer to both of these integer programs that produce the optima as $\opt$. We implement \efonealgo\ and \scalgo in Python~\cite{yiduo_impl}.  
In \scalgo, after looping through each course, exchange path operations are initiated. The graphs of exchange paths were implemented in NetworkX\cite{networkx} in Python.  The experiments are conducted on a Dell PowerEdge R740 with 2 x Intel Xeon Gold 6140 2.3GHz 18 core 36 threads processors, 192GB RAM, 
dual 10Gbps and 1Gbps NICs. 

The dataset was obtained through a Google Form sent out to Master's students who wished to take computer science courses. They could select and rank up to five courses. 
Since ordinal preferences are beyond the scope of this paper, we only considered the courses they desire (binary valuations).  




\begin{table}[!htb]
\resizebox{\columnwidth}{!}{
\begin{tabular}{ll|ccc|ccc|}
\multicolumn{2}{c|}{Datasets} & \multicolumn{3}{c|}{max-min} & \multicolumn{3}{c|}{total utility} \\ 
 &  & OPT & \efonealgo & \scalgo & OPT & \efonealgo & \scalgo \\ \hline
{\begin{tabular}[c]{@{}l@{}}real-world data\end{tabular}} & \begin{tabular}[c]{@{}l@{}}dataset\end{tabular} & 1 & 1 & 1 & 744 & 624 & 744 \\ \cline{2-8} 
 & \begin{tabular}[c]{@{}l@{}}alteration 1\end{tabular} & 2 & 1 & 2 & 725 & 623 & 725 \\ \cline{2-8} 
 & \begin{tabular}[c]{@{}l@{}}alteration 2\end{tabular} & 3 & 2 & 3 & 686 & 686 & 686 \\ \cline{2-8} 
 & \begin{tabular}[c]{@{}l@{}}alteration 3\end{tabular} & 2 & 2 & 2 & 760 & 760 & 760 \\ \hline
{\begin{tabular}[c]{@{}l@{}}synthetic data\end{tabular}} & example 1 & 2 & 2 & 2 & 7 & 7 & 6 \\ \cline{2-8} 
 & example 2 & 3 & 1 & 2 & 6 & 3 & 6 \\ \cline{2-8} 
 & example 3 & 4 & 3 & 3 & 8 & 6 & 6 \\ \cline{2-8} 
 & example 4 & 1 & 1 & 1 & 5 & 4 & 5 \\ \cline{2-8} 
 & example 5 & 1 & 1 & 1 & 6 & 5 & 6 \\ \cline{2-8} 
 & example 6 & 4 & 4 & 4 & 12 & 12 & 12 \\ \cline{2-8} 
 & example 7 & 4 & 4 & 4 & 12 & 12 & 12 \\ \cline{2-8}
 & example 8 & 4 & 4 & 4 & 12 & 12 & 12 \\ \hline
\end{tabular}
}
\caption{Comparison of utilities.}
\label{tab:table-of-utilities}
\end{table}


In terms of utility and max-min value, both algorithms incurred similar values as that of \opt. \cref{tab:table-of-utilities} compares the max-min value between \opt, \efonealgo, and \scalgo. For almost all instances listed in \cref{tab:table-of-runtimes}, \efonealgo\ was much faster than \opt. Since our input data is not too large, we could compute an optimal assignment by solving the corresponding IP using Gurobi, which is not scalable in general.

\begin{table}[!htb] 
\resizebox{\columnwidth}{!}{
\begin{tabular}{ll|ccc|c|}
\multicolumn{2}{c|}{Datasets} & \multicolumn{3}{c|}{max-min} & \multicolumn{1}{c|}{total utility} \\ 
 &  & OPT & \efonealgo & \scalgo & OPT \\ \hline
{\begin{tabular}[c]{@{}l@{}}real-world data\end{tabular}} & \begin{tabular}[c]{@{}l@{}}dataset\end{tabular} & 64.3614 & 44.893709 & 789.936638 & 1.883833 \\ \cline{2-6}
 & \begin{tabular}[c]{@{}l@{}}alteration 1\end{tabular} & 59.0851 & 44.675743 & 751.317634 & 1.818061 \\ \cline{2-6}
 & \begin{tabular}[c]{@{}l@{}}alteration 2\end{tabular} & 47.1846 & 38.609633 & 490.054614 & 1.447398\\ \cline{2-6}
 & \begin{tabular}[c]{@{}l@{}}alteration 3\end{tabular} & 59.0924 & 44.469586 & 1214.057334 & 1.625204\\ \hline
{\begin{tabular}[c]{@{}l@{}}synthetic data\end{tabular}} &   example 1 & 3.672 & 0.405452 & 1.505546 & 1.518124\\ \cline{2-6} 
 & example 2 & 0.6444 & 0.177736 & 0.831938 & 0.186128\\ \cline{2-6} 
 & example 3 & 2.3873 & 0.263971 & 1.145906 & 0.552925\\ \cline{2-6} 
 & example 4 & 2.5193 & 0.200845 & 1.052790 & 0.524367\\ \cline{2-6} 
 & example 5 & 3.3109 & 0.247982 & 1.344810 & 0.588912\\ \cline{2-6} 
 & example 6 & 1.1906 & 0.242676 & 0.667803 & 0.232696\\ \cline{2-6} 
 & example 7 & 11609.3298 & 594.707318 & 34986.057784 & 877.306024\\ \cline{2-6}
 & example 8 & 634809.8065 & 31554.700315 & 2278023.952813 & 9592.727642\\ \hline
\end{tabular}
}
\caption{Comparison of runtimes in milliseconds. All runtimes correspond to instances in the corresponding cells in \cref{tab:table-of-utilities}. There is only one column under \emph{total utility} because \efonealgo and \scalgo are executed only once, as opposed to the two different linear programs of OPT.}
\label{tab:table-of-runtimes}
\end{table}

The results of our experiments demonstrate the effectiveness of our algorithms. \efonealgo was able to give near-optimal solutions with a significantly reduced computational cost compared to integer programming, a traditional method. The reduced runtime is a testament to the effectiveness of the algorithms and their potential for practical implementation. The findings of this study highlight the potential for further improvement and optimization of these algorithms (especially \scalgo since its runtime has much room for improvement) making them an attractive option for real-world applications. Although \scalgo is slower
than \opt for some of the tested instances, we believe it will be much faster and more scalable on instances larger than what we 
tested in our experiments.

\section{Conclusions and Future Work}
 We investigated the problem of allocating conflicting resources across $n$ agents, while taking into account both fairness as well as overall utility of the assignment. While resource allocation is extremely well studied, cases when the resources have conflicts have not been well studied from an algorithmic perspective. 

Several generalizations of the course allocation problem open up interesting new directions for the fair allocation literature, such as generalizing utilities beyond additive binary and considering non-uniform credits for different courses. Further, each course may be a corresponding collection of time intervals (instead of a single interval). While we assume that our courses meet once a week, this may not be true for the general case where courses might meet on Tues-Thurs or Mon-Wed or Mon-Wed-Fri. If two courses overlap in any of the time windows then there is an edge in the conflict graph between them. However, such considerations would make the problem more challenging since the corresponding conflict graphs would be more complicated than interval graphs.

Going ahead, there are several directions for future research that can extend and improve upon our approach to course allocation. By addressing these challenges, we can develop more effective and fair algorithms for allocating courses to students, and better meet the diverse and evolving needs of students.


\bibliographystyle{plainurl}
\bibliography{ref}

\begin{thebibliography}{10}

\bibitem{yiduo_impl}
Fair course allocation implementations.
\newblock \url{https://github.com/yiduoke/CS-499-Khuller}.

\bibitem{ajtai1998fairness}
Miklos Ajtai, James Aspnes, Moni Naor, Yuval Rabani, Leonard~J Schulman, and
  Orli Waarts.
\newblock Fairness in scheduling.
\newblock {\em Journal of Algorithms}, 29(2):306--357, 1998.

\bibitem{BansalSviridenko06}
Nikhil Bansal and Maxim Sviridenko.
\newblock The santa claus problem.
\newblock In {\em Proceedings of the Thirty-Eighth Annual ACM Symposium on
  Theory of Computing}, STOC '06, page 31–40, New York, NY, USA, 2006.
  Association for Computing Machinery.
\newblock \href {https://doi.org/10.1145/1132516.1132522}
  {\path{doi:10.1145/1132516.1132522}}.

\bibitem{benabbou2021finding}
Nawal Benabbou, Mithun Chakraborty, Ayumi Igarashi, and Yair Zick.
\newblock Finding fair and efficient allocations for matroid rank valuations.
\newblock {\em ACM Transactions on Economics and Computation}, 9(4):1--41,
  2021.

\bibitem{bendraouche2015scheduling}
Mohamed Bendraouche, Mourad Boudhar, and Ammar Oulamara.
\newblock Scheduling: Agreement graph vs resource constraints.
\newblock {\em European Journal of Operational Research}, 240(2):355--360,
  2015.

\bibitem{bilo2014price}
Vittorio Bil{\`o}, Angelo Fanelli, Michele Flammini, Gianpiero Monaco, and Luca
  Moscardelli.
\newblock The price of envy-freeness in machine scheduling.
\newblock In {\em Mathematical Foundations of Computer Science 2014: 39th
  International Symposium, MFCS 2014, Budapest, Hungary, August 25-29, 2014.
  Proceedings, Part II 39}, pages 106--117. Springer, 2014.

\bibitem{biswas2018fair}
Arpita Biswas and Siddharth Barman.
\newblock Fair division under cardinality constraints.
\newblock In {\em Proceedings of the 27th International Joint Conference on
  Artificial Intelligence}, pages 91--97, 2018.

\bibitem{bodlaender2005equitable}
Hans~L Bodlaender and Fedor~V Fomin.
\newblock Equitable colorings of bounded treewidth graphs.
\newblock {\em Theoretical Computer Science}, 349(1):22--30, 2005.

\bibitem{bodlaender1994scheduling}
Hans~L Bodlaender, Klaus Jansen, and Gerhard~J Woeginger.
\newblock Scheduling with incompatible jobs.
\newblock {\em Discrete Applied Mathematics}, 55(3):219--232, 1994.

\bibitem{bonomo2011bounded}
Flavia Bonomo, Sara Mattia, and Gianpaolo Oriolo.
\newblock Bounded coloring of co-comparability graphs and the pickup and
  delivery tour combination problem.
\newblock {\em Theoretical Computer Science}, 412(45):6261--6268, 2011.

\bibitem{k-track-ass}
Peter Brucker and L.~Nordmann.
\newblock The \emph{k}-track assignment problem.
\newblock {\em Computing}, 52(2):97--122, 1994.
\newblock \href {https://doi.org/10.1007/BF02238071}
  {\path{doi:10.1007/BF02238071}}.

\bibitem{budish2011combinatorial}
Eric Budish.
\newblock The combinatorial assignment problem: Approximate competitive
  equilibrium from equal incomes.
\newblock {\em Journal of Political Economy}, 119(6):1061--1103, 2011.

\bibitem{cai2018approximation}
Yinhui Cai, Guangting Chen, Yong Chen, Randy Goebel, Guohui Lin, Longcheng Liu,
  and An~Zhang.
\newblock Approximation algorithms for two-machine flow-shop scheduling with a
  conflict graph.
\newblock In {\em Computing and Combinatorics: 24th International Conference,
  COCOON 2018, Qing Dao, China, July 2-4, 2018, Proceedings 24}, pages
  205--217. Springer, 2018.

\bibitem{CARLISLE1995225}
Martin~C. Carlisle and Errol~L. Lloyd.
\newblock On the k-coloring of intervals.
\newblock {\em Discrete Applied Mathematics}, 59(3):225--235, 1995.
\newblock URL:
  \url{https://www.sciencedirect.com/science/article/pii/0166218X9580003M},
  \href {https://doi.org/https://doi.org/10.1016/0166-218X(95)80003-M}
  {\path{doi:https://doi.org/10.1016/0166-218X(95)80003-M}}.

\bibitem{chiarelli2022fair}
Nina Chiarelli, Matja{\v{z}} Krnc, Martin Milani{\v{c}}, Ulrich Pferschy,
  Nevena Piva{\v{c}}, and Joachim Schauer.
\newblock Fair allocation of indivisible items with conflict graphs.
\newblock {\em Algorithmica}, pages 1--31, 2022.

\bibitem{davies2020tale}
Sami Davies, Thomas Rothvoss, and Yihao Zhang.
\newblock A tale of santa claus, hypergraphs and matroids.
\newblock In {\em Proceedings of the Fourteenth Annual ACM-SIAM Symposium on
  Discrete Algorithms}, pages 2748--2757. SIAM, 2020.

\bibitem{fisher2002resource}
Colin Fisher.
\newblock {\em Resource allocation in the public sector: Values, priorities and
  markets in the management of public services}.
\newblock Routledge, 2002.

\bibitem{gardi2008mutual}
Fr{\'e}d{\'e}ric Gardi.
\newblock Mutual exclusion scheduling with interval graphs or related classes.
  part ii.
\newblock {\em Discrete applied mathematics}, 156(5):794--812, 2008.

\bibitem{threepartition}
M.~R. Garey and D.~S. Johnson.
\newblock {\em Computers and Intractability: A Guide to the Theory of
  NP-Completeness (Series of Books in the Mathematical Sciences)}.
\newblock W. H. Freeman, first edition edition, 1979.
\newblock URL:
  \url{http://www.amazon.com/Computers-Intractability-NP-Completeness-Mathematical-Sciences/dp/0716710455}.

\bibitem{GOMES202159}
Guilherme~C.M. Gomes and Vinicius~F. {dos Santos}.
\newblock Kernelization results for equitable coloring⁎⁎this work was
  partially supported by capes, cnpq, and fapemig.
\newblock {\em Procedia Computer Science}, 195:59--67, 2021.
\newblock Proceedings of the XI Latin and American Algorithms, Graphs and
  Optimization Symposium.
\newblock URL:
  \url{https://www.sciencedirect.com/science/article/pii/S1877050921021505},
  \href {https://doi.org/https://doi.org/10.1016/j.procs.2021.11.011}
  {\path{doi:https://doi.org/10.1016/j.procs.2021.11.011}}.

\bibitem{gurobi}
{Gurobi Optimization, LLC}.
\newblock {Gurobi Optimizer Reference Manual}, 2023.
\newblock URL: \url{https://www.gurobi.com}.

\bibitem{networkx}
Aric~A. Hagberg, Daniel~A. Schult, and Pieter~J. Swart.
\newblock Exploring network structure, dynamics, and function using {NetworkX}.
\newblock In Ga\"el Varoquaux, Travis Vaught, and Jarrod Millman, editors, {\em
  Proceedings of the 7th Python in Science Conference}, pages 11--15, Pasadena,
  CA USA, 2008.

\bibitem{hansen1993bounded}
Pierre Hansen, Alain Hertz, and Julio Kuplinsky.
\newblock Bounded vertex colorings of graphs.
\newblock {\em Discrete Mathematics}, 111(1-3):305--312, 1993.

\bibitem{hummel2022fair}
Halvard Hummel and Magnus~Lie Hetland.
\newblock Fair allocation of conflicting items.
\newblock {\em Autonomous Agents and Multi-Agent Systems}, 36(1):8, 2022.

\bibitem{im2020fair}
Sungjin Im and Benjamin Moseley.
\newblock Fair scheduling via iterative quasi-uniform sampling.
\newblock {\em SIAM Journal on Computing}, 49(3):658--680, 2020.

\bibitem{jansen2003mutual}
Klaus Jansen.
\newblock The mutual exclusion scheduling problem for permutation and
  comparability graphs.
\newblock {\em Information and Computation}, 180(2):71--81, 2003.

\bibitem{kierstead2010fast}
Henry~A Kierstead, Alexandr~V Kostochka, Marcelo Mydlarz, and Endre
  Szemer{\'e}di.
\newblock A fast algorithm for equitable coloring.
\newblock {\em Combinatorica}, 30(2):217--224, 2010.

\bibitem{KleinbergTardosBook}
Jon Kleinberg and \'Eva Tardos.
\newblock {\em Algorithm Design}.
\newblock Addison Wesley, 2006.

\bibitem{kowalczyk2017exact}
Daniel Kowalczyk and Roel Leus.
\newblock An exact algorithm for parallel machine scheduling with conflicts.
\newblock {\em Journal of Scheduling}, 20(4):355--372, 2017.

\bibitem{minmingli}
Bo~Li, Minming Li, and Ruilong Zhang.
\newblock Fair allocation with interval scheduling constraints, 2021.
\newblock URL: \url{https://arxiv.org/abs/2107.11648}, \href
  {https://doi.org/10.48550/ARXIV.2107.11648}
  {\path{doi:10.48550/ARXIV.2107.11648}}.

\bibitem{lipton2004approximately}
Richard~J Lipton, Evangelos Markakis, Elchanan Mossel, and Amin Saberi.
\newblock On approximately fair allocations of indivisible goods.
\newblock In {\em Proceedings of the 5th ACM Conference on Electronic
  Commerce}, pages 125--131, 2004.

\bibitem{mallek2022scheduling}
Amin Mallek and Mourad Boudhar.
\newblock Scheduling on uniform machines with a conflict graph: complexity and
  resolution.
\newblock {\em International Transactions in Operational Research}, 2022.

\bibitem{marecek2019semidefinite}
Jakub Marecek and Andrew~J Parkes.
\newblock Semidefinite programming in timetabling and mutual-exclusion
  scheduling.
\newblock {\em arXiv preprint arXiv:1904.03539}, 2019.

\bibitem{viswanathan2022yankee}
Vignesh Viswanathan and Yair Zick.
\newblock Yankee swap: a fast and simple fair allocation mechanism for matroid
  rank valuations.
\newblock {\em arXiv preprint arXiv:2206.08495}, 2022.

\end{thebibliography}
\newpage
\appendix

\section{Maximizing $b$-Matching with Conflicts}\label{b-match}
In this section, we justify our model of representing
courses as interval graphs by showing that the general
problem of assigning courses to students is NP-complete
when given arbitrary numbers of time segments (or intervals) for each
course. Namely, when each course can take place over any
arbitrary number of time periods, then the conflict 
graph can be represented as any general graph. We 
now discuss the more general problem of assigning 
resources to agents where in our specific setting, courses
can be modeled as resources and students as agents. 

One way to view the problem of maximizing the utility of assigning resources to agents, where each agent is assigned a set of non-conflicting resources, is to realize that any agent's allocation is an independent set in the conflict graph. Assigning resources to  $n$ agents then becomes a maximum graph coloring problem, where the resources have to be colored with one of $n$ different colors so that no two adjacent resources have the same color, but we simply attempt to maximize the number of colored resources (nodes). 
If the conflict graph has no restrictions or structure,  then even the simplest case becomes $NP$-hard as we show next.

\begin{figure}[htbp]
  \centering
  \includegraphics[width=0.25\textwidth]{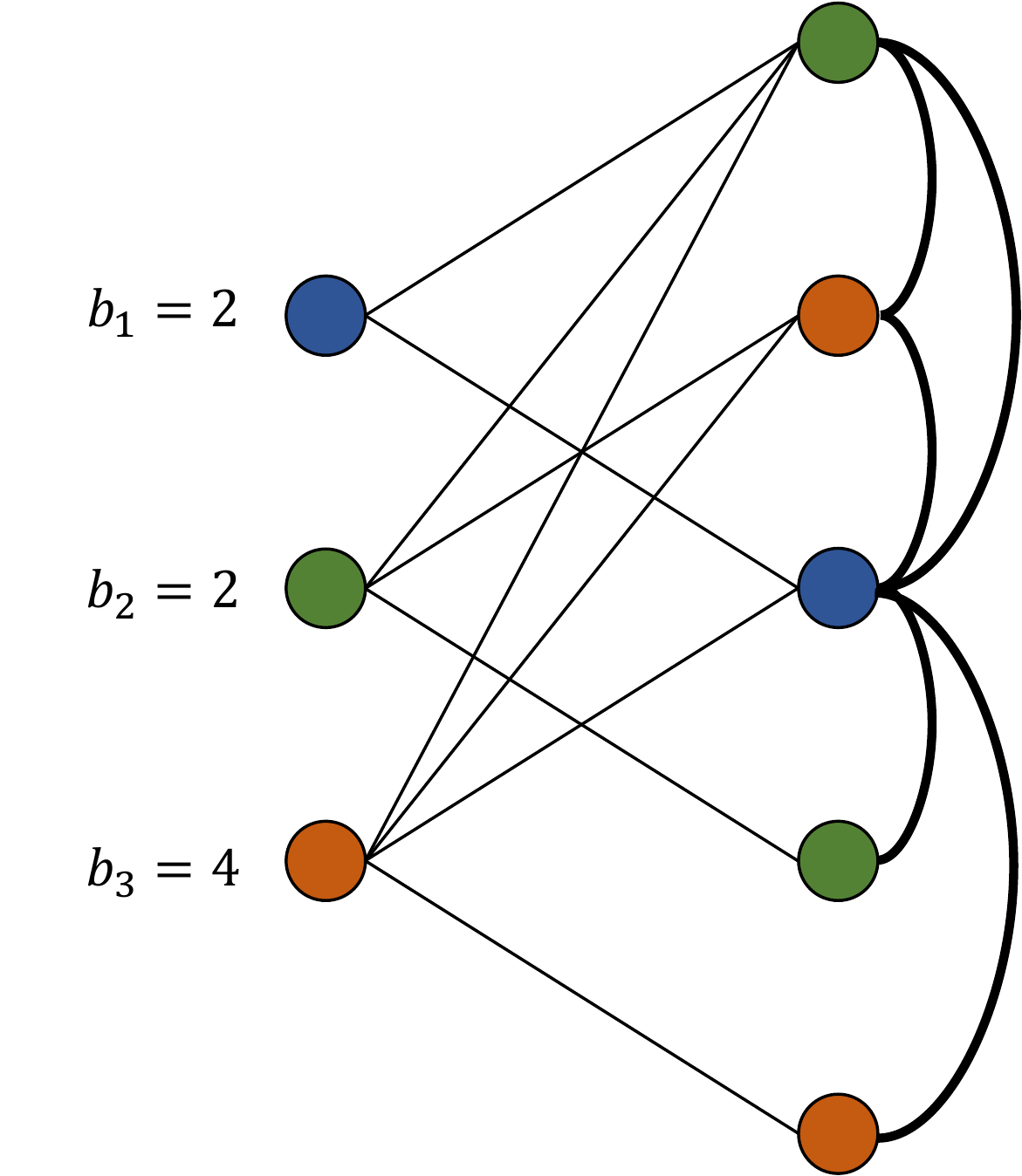}
  \caption{Example $b$-matching with allocations of resources indicated by the different colors.}
\end{figure}

A $b$-matching of any graph is a degree constrained subgraph, where the degree of any node in the subgraph cannot exceed $b(v)$, a specified value. Note that any allocation of goods to agents can be thought of as a $b$-matching where the edges encode the value of the good to that agent, and the degree constraints model the number of seats in a course (available copies of the good to be assigned to agents) and the degree constraint on the agent nodes corresponds to an upper bound as to how many resources they desire.

\begin{definition}[$b$-Matching with Conflicts (\mbmwc)]
Given a bipartite graph $G=(L \cup R,E)$, a length $|L \cup R|$ vector $\Vec{b}$ of non-negative integers, and a set of pairs $(a, a') \in F$ denoting conflicts between nodes on the same side (i.e.\ either $a \in L$ and $a' \in L$, or $a \in R$ and to $a' \in R$) such that no node $v$ can be matched to $a$ and $a'$ at the same time, a feasible b-matching with conflicts is one where the conflicts are respected and no node $p$ gets matched to more than $b(p)$ nodes on the other side. A maximum $b$-matching for \mbmwc is a feasible matching of 
maximum weight.
\end{definition}

Even if we simply want to maximize the overall weight of the $b$-matching (i.e. the sum of everyone's allocation), the problem is $NP$-hard. This can be shown by a simple reduction from independent set.

\begin{definition} [Maximum Independent Set (\mis)]
Given a graph $G=(V,E)$, set of vertices $V' \subseteq V$ is independent if and only if $\forall p, q \in V', (p,q) \notin E$, i.e. no pair of vertices in $V'$ shares an edge. A maximum independent set of a graph is an independent set with maximum cardinality.
\end{definition}

Given a graph $G$ and an integer $k$, asking for the existence of an independent set of size at least $k$ is an $NP$-complete problem. We prove the difficulty of our problem by a reduction from the Independent Set problem.

\begin{theorem}
Given a bipartite graph $G=(L \cup R,E)$, a vector $\Vec{b}$, and a set of pairs $F$ denoting conflicts, finding a  
$b$-matching satisfying \mbmwc is $NP$-hard.
\end{theorem}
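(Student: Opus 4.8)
The plan is to give a polynomial-time reduction from \mis, whose decision version is $NP$-complete. Starting from an \mis instance $(G, k)$ with $G = (V, E)$, I would construct a \mbmwc instance in which the conflict set $F$ directly encodes the edges of $G$, so that any feasible matching is forced to select an independent set on the resource side.

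Concretely, I would create a single agent node $\ell$ on the left, so $L = \{\ell\}$, and one resource node $r_v$ on the right for every vertex $v \in V$, so $R = \{r_v : v \in V\}$. I connect $\ell$ to every $r_v$ by a unit-weight edge. The degree bounds are $b(\ell) = |V|$ (large enough to impose no real restriction) and $b(r_v) = 1$ for each $v$. Finally, for every edge $(u, v) \in E$ I add the conflict pair $(r_u, r_v)$ to $F$; all conflicts thus lie on the right-hand side, exactly as the definition of \mbmwc permits. This construction is clearly computable in time polynomial in $|V| + |E|$.

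The correctness argument has two directions. For the forward direction, if $G$ has an independent set $S$ with $|S| \ge k$, then matching $\ell$ to precisely the resources $\{r_v : v \in S\}$ is feasible: independence of $S$ means no conflict pair in $F$ is violated, the bound $b(\ell) = |V| \ge |S|$ is respected, and each $b(r_v) = 1$ is respected, so this matching has weight $|S| \ge k$. For the reverse direction, take any feasible matching of weight at least $k$; since every edge has unit weight, it matches $\ell$ to a set of at least $k$ resources, and feasibility with respect to $F$ guarantees that no two of the corresponding vertices are adjacent in $G$, so those vertices form an independent set of size at least $k$. Hence $G$ has an independent set of size $\ge k$ if and only if the maximum-weight feasible \mbmwc has weight $\ge k$.

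Because the reduction is so direct, I do not expect a genuine obstacle; the only point requiring care is to confirm that a single agent already suffices to capture the independent-set structure, and that maximizing the matching weight under unit edge weights coincides exactly with maximizing the cardinality of the selected independent set. With both directions of the equivalence established, the stated $NP$-hardness of \mbmwc follows immediately.
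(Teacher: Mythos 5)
Your reduction is correct and is essentially the paper's own proof: a single agent joined by unit-weight edges to one capacity-one resource per vertex of the \mis instance, with the conflict set $F$ encoding the edges of $G$. The only cosmetic difference is that you set the agent's degree bound to $|V|$ while the paper sets it to $k$; both choices make the two directions of the equivalence go through identically.
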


\begin{proof}
Given an instance of maximum independent set problem, $G = (V, E)$, and an integer $k$, we construct an instance of \mbmwc, $H = (L \cup R, E')$ where $L$ consists of one node (agent) $v$ and $R = G$. We then create edges from $v$ to all vertices in $R$. Let $b(v) = k$ and let $b_u = 1$ for all $u \in R$. 

If we have a solution to \mbmwc in $H$ of weight $k$, then the matched vertices in $R$ give a maximum independent set in $G$ of cardinality $k$. In addition, if the graph $G$ does contain an independent set of size at least $k$ then any subset of $k$ nodes can be safely matched with $v$ (and they form a conflict free set).

\end{proof}

\section{Proofs}

\subsection{Proof of \cref{thm:uni-utility-arb-cred-cap-hardness}}\label{app:uni-utility-arb-cred-cap-hardness}
\begin{proof}
This proof is a reduction from \threepartition \cite{threepartition}.

\begin{definition}[\CAUTCDECISION]
Consider our problem \CAUTCSW in \cref{sec:model}, instead of the objective of maximizing it, the decision version of it is that given the extra parameter $k$, is there an allocation such that total student utility is $k$?
\end{definition}

\begin{definition}[(\threepartition)]

Given a multiset of numbers, can one partition the numbers into triplets such that the sum of each triplet is equal? More precisely, and with an additional restriction on each number. Given a multiset $S$ of $3m$ positive integers where $\sum_{i \in S} x_i = mT$, and each integer $x_i \in S$ satisfies $T/4 < x_i < T/2$, does there exist a partition of $S$ into $m$ disjoint subsets $S_1$, $S_2$,...,$S_m$ such that the sum of the $x_i$ values in each set $S_j$  add exactly to $T$?
\end{definition}

Given an instance of \threepartition, one can reduce it to an instance of \CAUTCDECISION where utilities are uniform and credit caps are uniform and course credit counts are arbitrary. Let there be $m$ students $s_1$, $s_2$,...,$s_m$, each with credit cap $T$, and let each number $x_i \in S$ from \threepartition represent a course of credit count $x_i$. No two courses overlap. Every student is interested in every course with uniform utilities. Let $k=mT$. If the solution to \CAUTCDECISION is yes, then the solution to \threepartition is also yes. But first, we have to prove that if there is a solution to \CAUTCDECISION, then each student is allocated exactly three courses. Since the total student utility is $k=mT$ and each student has a credit cap of $T$, each student is allocated courses whose credits sum to exactly $T$. Each student must have at least three courses, because each course $j$ has credit $c_j < T/2$. On the other hand, each student must have at most three courses, because each course $j$ has credit $c_j > T/4$. \CAUTCDECISION is therefore NP-hard.

\end{proof}

\subsection{Proof of \cref{thm:CAUTC-hardness}}\label{app:CAUTC-hardness}
\begin{proof}
This proof is based on the reduction from \arccoloring to the $k$-track assignment problem by Brucker and Nordmann \cite{k-track-ass} showing NP-hardness of the $k$-track assignment problem.

\begin{definition}[$k$-coloring problem for circular arc graphs (\arccoloring)]

Given a positive integer $k$ and a set $F$ of $n$ circular arcs ${A_1, A_2, ... A_n}$, where each $A_i$ is an ordered pair $(a_i, b_i)$ of positive integers where either $a_i < b_i$ or $b_i < a_i$, can $F$ be partitioned into $k$ disjoint subsets so that no two arcs in the same subset intersect?

\begin{figure}[h]
\centering
\begin{tikzpicture}

\draw (0,0) circle (2cm);

\draw [|-|] [domain=5:35] plot ({2.3*cos(\x)}, {2.3*sin(\x)});

\draw [|-|] [domain=70:130] plot ({2.3*cos(\x)}, {2.3*sin(\x)});

\draw [|-|] [domain=25:60] plot ({2.6*cos(\x)}, {2.6*sin(\x)});

\draw [|-|] [domain=115:240] plot ({2.6*cos(\x)}, {2.6*sin(\x)});

\draw [|-|] [domain=50:120] plot ({2.9*cos(\x)}, {2.9*sin(\x)});

\draw [|-|] [domain=95:340] plot ({3.2*cos(\x)}, {3.2*sin(\x)});


\end{tikzpicture}

\caption{A circular arc model}
\end{figure}
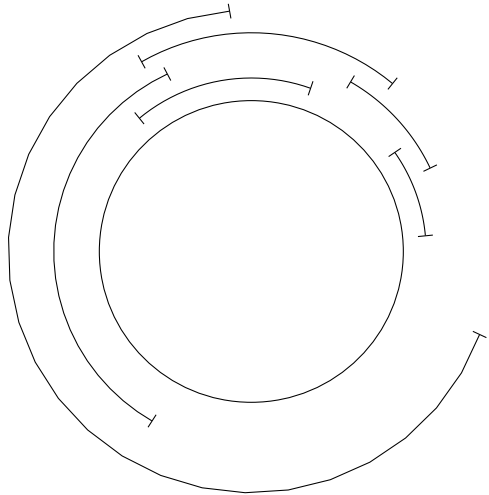

\end{definition}

The following simple reduction from \arccoloring shows that CAUTC is NP-hard: we cut the circle from the $k$-coloring problem for circular arc graphs at some arbitrary but fixed point $t$. Without loss of generality we calibrate that as $t=0$, and the courses $I_i$ have the form $I_i=[s_i, t_i]$, where each $s_i$ and $t_i$ is modulo $L$, the length of the circle. 

Now assume that only the courses $I_1,...,I_r$ contain the point $t=0$ and that $r \leq k$, for if $r > k$, then the $k$-coloring problem has no solution. We define $k$ students by making them have a utility of 1 only for the courses that overlap with the time interval $[t_j, s_j]$ for $j = 1,...,r$ and $[0,L]$ for $j = r+1,...k$. Now the problem of assigning the remaining courses $I_{r+1},...,I_n$ to these $k$ students is equivalent to the $k$-coloring problem.
\end{proof}

\subsection{Proof of \cref{thm:constant-students-DP}}\label{constant-students-DP}

\begin{proof}
We give a dynamic programming solution for two students, which is easily extendable to any constant $k$ 
number of students. We sort the courses by non-decreasing start time and use this order to consider the 
courses in our DP. We define $N(j)$ to be the set of courses that overlap with course $j$. Given an instance 
of CAUTC with a constant number of students, for each course $j \in [m]$,
course $j$ is either assigned to student $1$, to student $2$, or to no one. 
The states of our DP are as follows. For each of the two students, 
we maintain a counter, $p_1$ and $p_2$, respectively, for 
the remaining number of credits available to student $1$ and $2$; we also maintain
the set of courses available to students $1$ and $2$ where $t_1$ and $t_2$ 
denote the earliest time that a course which starts
at that time can be assigned to students
$1$ and $2$, respectively. Finally, we maintain
a counter $j$ indicating the current course being iterated on.

Each time a course $j$ is assigned to wlog student $1$, we subtract the credit count of the course, $c_j$,
from $p_1$ (the total credit count of the student course $j$ is assigned to), 
increment $t_1$ by the duration of course $j$, that is we update
$t_1$ to $t_1 + d_j$. We define our base case to be 
\begin{equation}
    OPT[p_1, p_2, t_1, t_2, m+1] = 0
\end{equation}
for any valid $p_1, p_2, t_1, t_2$ and our initial state is
\begin{equation}
    OPT[p_1, p_2, 0, 0, 0].
\end{equation}

We therefore have our recurrence scheme as follows:
\begin{equation}
    \begin{gathered}
    OPT[p_1, p_2, t_1, t_2, j] = \max(OPT[p_1, p_2, t_1, t_2, j+1], \\
\mathbbm{1}(\starttime_j \geq t_1 \cap c_j \leq p_1) \times (u_1(j) + OPT[p_1-c_j, p_2, \etime_j, t_2, j+1]), \\
\mathbbm{1}(\starttime_j \geq t_2 \cap c_j \leq p_2) \times (u_2(j) + OPT[p_1, p_2-c_j, t_1, \etime_j, j+1])
)
    \end{gathered}
\end{equation}
We now prove the optimality of our solution via induction. In the base case, course $m+1$ does not 
exist, hence, no utility  is given for the base case. We now assume for our 
induction hypothesis that the state for the $j$-th job is an optimum assignment
of courses to students for all valid values of $p_1, p_2, t_1, t_2$. 
Now, we show that the optimum solution is computed for the $(j+1)$-st job. 
For the $(j+1)$-st course, it can either be given to student $1$ or $2$ or given to no one.
Wlog suppose the $(j+1)$-st course is given to student $1$. In this case, if $\starttime_{j+1} < t_1$
or $c_j > p_1$, then the returned value is $0$ since course $j+1$ cannot be assigned to student $1$
in this case. Otherwise, we show that the states are correctly updated. When $j+1$ is assigned to 
student $1$, the amount of available credits is decreased for student $1$ by $c_{j+1}$ and
$t_1$ is increased to $\etime_{j+1}$. Since the courses are sorted in non-decreasing order by start 
time, when course $j + 1$ is being considered, no course with start time earlier than $\starttime_{j+1}$
is being considered. Thus, all courses $j' > j+1$ have start time $\geq \starttime_{j+1}$ and so 
will conflict with course $j + 1$ if and only if $\starttime_{j + 1} \leq \starttime_{j'} < \etime_{j+1}$.
Hence, setting $t_1$ to $\etime_{j+1}$ precisely eliminates the courses $j' > j+1$ that conflict with 
course $j + 1$. Since course $j + 1$ has been assigned to student $1$, the utility $u_1(j+1)$ is added. 
Finally, the counter is incremented to $j + 2$. The case for assigning $j + 1$ to student $2$ is symmetric.
When $j + 1$ is not given to either student, then no utility is added to the previous values and 
the counter is incremented to $j + 2$ with no other changes in the state. There are only three different
cases for course $j + 1$: it is assigned to either student $1$ or $2$ or assigned to no one. Using the 
induction hypothesis and taking the maximum of the three options results in the maximum value for 
assigning course $j + 1$.

Now we prove the runtime of our DP algorithm. Since
$c_j = O(1)$ for all $j \in [m]$, we can upper bound $p_1$ and $p_2$ by $O(m)$. We can bound $t_1$
and $t_2$ as follows. We only increment each of these counters to an end time of a course. There 
are at most $m$ distinct end times and thus
the total number of values $t_1$ and $t_2$ can take
is $m$. Finally, the last counter is upper bounded
by $m$. Hence, there are at most $O(m^5)$ different
unique states for our DP and our algorithm takes
$O(m^5)$ time. For $s = O(1)$ students, our 
algorithm would take $O(m^{2s + 1})$ time.
\end{proof}

\subsection{Proof of~\cref{thm:unif-creds-lengths-utilities}}\label{sec:greedy-comes-first}

\begin{proof}
We first prove the optimality of~\cref{alg:round-robin}. In this 
proof, we use the classical greedy-comes-first strategy.
In the sorted order of courses by end time, 
let $J$ be an optimum assignment
of courses to students. We show that our greedy algorithm does
not produce a worse assignment than $J$, thus proving
its optimality. We prove this via induction on the $k$-th course
in the order sorted by end time. 
We aim to 
show that for all $k \leq m$, the number of courses assigned by the
greedy algorithm to each student up to course $k$ is at least the number of courses with index $\leq k$ (in the sorted
order) assigned in $J$ to each student. 

In the base case, when $k = 1$,
no courses have been assigned yet, so either the first course is 
assigned to some student with a sufficiently large credit cap or 
no student has a sufficiently large credit cap in which case it also
cannot be assigned in $J$. We assume for our induction hypothesis that
our greedy algorithm has assigned at least as many courses up to and 
including the $k$-th course to each student as the number of courses in $J$ with index $\leq k$ (in the 
sorted order by end time) assigned
to each student. We now prove this for the $(k+1)$-st course. The trivial cases are when the $(k+1)$-st
course is not in $J$ or if the $(k+1)$-st course is assigned by the greedy 
algorithm. Let the $(k+1)$-st course be course $j$.
If the course is in $J$ and it is not assigned by the greedy algorithm to any student, then 
each student must satisfy at least one of the two following scenarios:

\begin{enumerate}
    \item Student $i \in [n]$ has not enough remaining credits.\label{case:1}
    \item Student $i \in [n]$ is assigned a conflicting course.\label{case:2}
\end{enumerate}

If \cref{case:1} is true, then student $i$ is assigned as many courses by the greedy algorithm
as they were assigned in $J$; in other words, student $i$ is assigned the maximum number of 
courses they can take; this means that the greedy algorithm returned a solution no worse than $J$, since every student has reached their credit cap (since all courses have the same number of credits), and there is no way to improve upon that.

Otherwise, if~\cref{case:1} is not true and~\cref{case:2} is true
then we consider the course with the \emph{latest} end time that is $\geq \etime_{j}$. 
Such a course must exist by our greedy algorithm since if no such conflicting course 
exists, then $j$ would be assigned to $i$. Let this conflicting course be $j'$. Then, courses $j$ and $j'$
cannot both be assigned to student $i$ in $J$. By our 
induction hypothesis, the greedy algorithm assigned at least as many courses to student $i$
with index $\leq k$ as the number of courses assigned to $i$ in $J$ with index $\leq k$.
Suppose wlog that $j'$ is the only course assigned to $i$ that conflicts with $j$
and we remove course $j'$ from student $i$'s assignment and instead assign $j$. Then, 
the number of courses assigned to $i$ cannot increase. Now we argue that removing $j'$
cannot allow another course to be assigned to $i$. Suppose there exists another
course $\ell$ that is assigned in $J$ and conflicts with $j'$ and does not conflict 
with $j$ (so that both $\ell$ and $j$ can be assigned to $i$ if $j'$ is removed).
Since all courses have the same duration, it must be the case that if $\ell$ exists
then $\ell$ has start time earlier than $j'$ and has end time earlier than the 
start time of $j$. In that case, $j'$ could not have
prevented $\ell$ from being assigned to $i$ and there exists another course assigned by greedy
to $i$ that conflicts with $\ell$. Hence, no such $\ell$ can exist and removing 
$j'$ and adding $j$ cannot lead to another course $\ell$ with start time earlier
than $\starttime_j$ to be assigned to $i$. In other words, if $\ell$ had been 
assigned to $i$ by greedy, then $j$ would also have 
been chosen, which contradicts our initial assumption that $j$ and $\ell$ conflict; 
and if $\ell$ hadn't been assigned to $i$ by greedy, it's because a course 
that starts earlier than $\ell$ overlaps with it, in which case removing $j'$ does not enable $\ell$ 
to be assigned to $i$.

Finally, courses $j'$ and $j$ cannot both be assigned to the same student in $J$. Thus,
if $j'$ is assigned to a student in $J$, then $j$ is not assigned to that student. 
Hence, we only need to consider the case when $j'$ is not assigned in $J$. 
By our argument above, at most one course in $J$ is charged
to a course assigned by our algorithm; hence,
in this case, by what we proved above and by the induction hypothesis the number of courses
assigned to $i$ by the greedy algorithm with index $\leq k+1$ 
is at least the number of courses assigned to $i$ with index $\leq k+1$ by the optimum solution $J$. 
\end{proof}

\end{document}